\numberwithin{equation}{section}
\newcommand{\e}{\mathrm{e}}
\newcommand{\bbE}{\mathbb{E}}
\newcommand{\bbP}{\mathbb{P}}
\newcommand{\bbR}{\mathbb{R}}
\newcommand{\bbQ}{\mathbb{Q}}
\newcommand{\bbN}{\mathbb{N}}
\newcommand{\bbY}{\mathbb{Y}}
\newcommand{\bbZ}{\mathbb{Z}}
\newcommand{\de}{\mathrm{d}}
\newcommand{\Nexp}{N_{\text{exp}}}
\newcommand\fa[1]{\left[#1\right]}
\newcommand\br[1]{\left(#1\right)}
\newcommand\hua[1]{\left\{#1\right\}}
\newcommand\bm[1]{\boldsymbol{#1}}
\def\calB{\mathcal{B}}
\def\calF{\mathcal{F}}
\def\calL{\mathcal{L}}
\def\calN{\mathcal{N}}
\def\calO{\mathcal{O}}
\def\calP{\mathcal{P}}
\def\calV{\mathcal{V}}
\definecolor{myorange}{RGB}{201,53,56}
\newtheorem{remark}{Remark}[section]
\newtheorem{theorem}{Theorem}[section]
\newtheorem{assumption}{Assumption}[section]
\newcommand*\xbar[1]{%
	\hbox{%
		\vbox{%
			\hrule height 0.5pt 
			\kern0.5ex
			\hbox{%
				\kern-0.1em
				\ensuremath{#1}%
				\kern-0.1em
			}%
		}%
	}%
}
\def\XXint#1#2#3{{\setbox0=\hbox{$#1{#2#3}{\int}$}
		\vcenter{\hbox{$#2#3$}}\kern-.5\wd0}}
\title{Unsupervised Learning-based Calibration Scheme for Rough Volatility Models}
\author{Changqing Teng \thanks{Department of Mathematics, The University of Hong Kong, Pokfulam Road, Hong Kong. Email: {\tt{u3553440@connect.hku.hk}}.} \quad and \quad Guanglian Li\thanks{Corresponding author. Department of Mathematics, The University of Hong Kong, Pokfulam Road, Hong Kong. Email: {\tt{lotusli@maths.hku.hk}}.}}
\begin{document}
	
	\maketitle	
	
	\begin{abstract}
		Existing deep learning-based calibration scheme for rough volatility models predominantly rely on supervised learning frameworks, which incur significant computational costs due to the necessity of generating massive synthetic training datasets. In this work, we propose a novel unsupervised learning-based calibration scheme for rough volatility models that eliminates the data generation bottleneck. Our approach leverages the backward stochastic differential equation (BSDE) representation of the pricing function derived by Bayer et al. \cite{bayer2022pricing}. By treating model parameters as trainable variables, we simultaneously approximate the BSDE solution and optimize the parameters within a unified neural network training process, with the terminal misfit as the loss. We theoretically establish that the mean squared error between the model-implied prices and market data is bounded by the loss function. Furthermore, we prove that the loss can be minimized to an arbitary degree, depending on the model's market fitting capacity and the universal approximation capability of neural networks. Numerical experiments for both simulated and historical S\&P 500 data based on rough Bergomi (rBergomi) model demonstrate the efficiency and accuracy of the proposed scheme.
	\end{abstract}
	
	\textbf{Keywords:} Rough volatility; Calibration; Unsupervised Learning; Backward SDE

	\section{Introduction}
	Since the pioneering work \cite{gatheral2018volatility}, there has been significant and growing interest in studying rough volatility models \cite{abi2019multifactor, bayer2020regularity, bayer2019short, forde2017asymptotics,fukasawa2021volatility,fukasawa2019volatility,horvath2021deepa,livieri2018rough, romer2022empirical}, owing to their statistical consistency and effectiveness in option pricing. Consider a filtered probability space $(\Omega, \calF, \br{\calF_t}_{t \in [0, T]}, \bbQ)$ with $\bbQ$ being the risk-neutral measure. The general dynamics of a rough volatility model are given by
	\begin{align}\label{eq: underlying}
		\de S_t = rS_t\de t + S_t\sqrt{V_t}\br{\rho \de W_t + \sqrt{1 - \rho^2}\de W_t^\perp},\quad S_0 = s_0,		
	\end{align}   
	where $t \in [0, T]$ and $T \in (0, +\infty)$ represents the terminal time. The constant $r$ denotes the interest rate, while $W_t$ and $W_t^\perp$ are independent standard Brownian motions. We denote by $\br{\calF_t}_{t \in [0, T]}$ the augmented filtration generated by $W$ and $W^\perp$, and by $\br{\calF_t^W}_{t \in [0, T]}$ the augmented filtration generated by $W$ alone.
	The variance process $V$ satisfies the following assumption:
	\begin{assumption}\label{assump: variance process}	
		The variance process $V$ possesses a.s. continuous, non-negative trajectories and is adapted to $\mathcal{F}_t^W$. Furthermore, $V_t$ is integrable; that is, 
		\begin{align}\label{eq: regularity of V 1}
			\bbE\fa{\int_0^T V_s \de s} < \infty, \quad T>0. 
		\end{align}	
	\end{assumption}	
	A representative example of the rough volatility paradigm satisfying Assumption \ref{assump: variance process} is the \textit{rough Bergomi (rBergomi) model} \cite{bayer2016pricing}, where the variance is defined as
	\begin{align}\label{eq: rBergomi variance}
		V_t = \xi_0(t)\exp\br{\eta\sqrt{2H} \int_{0}^{t}(t-s)^{H - \frac{1}{2}}\de W_s- \frac{\eta^2}{2}t^{2H}},\quad V_0 = v_0,
	\end{align}
	where $\xi_0(t) := \mathbb{E}[V_t|\calF_0]$ represents the initial forward variance curve. The term "rough" derives from the Hurst index $H \in (0, {1}/{2})$, as the sample paths of $V_t$ are a.s. $\br{H - \varepsilon}$-H\"older continuous for any $\varepsilon > 0$, exhibiting lower regularity than standard Brownian motion. The correlation $\rho \in (-1, 0)$ is ensures the discounted price $\e^{-rt}S_t$ remains a martingale \cite{gassiat2019martingale} and $\eta > 0$ denotes the volatility of volatility. Another prominent model is the \textit{rough Heston model} \cite{el2019characteristic}, a rough extension of the celebrated Heston model, governed by the following variance dynamics:
	\begin{align*}
		V_t = V_0 + \frac{1}{\Gamma(H+1/2)}\int_0^t (t-s)^{H-1/2}(\theta - \lambda V_s)\de s + \frac{1}{\Gamma(H+1/2)}\int_0^t (t-s)^{H-1/2}\nu\sqrt{V_s}\de W_s,
	\end{align*}
	This model also satisfies Assumption \ref{assump: variance process}, as shown in \cite{abi2019affine}.
	
	The practical utility of any financial model relies heavily on the availability of efficient calibration methods. Calibration entails identifying the model parameters $\theta$ within a set $\Theta \in \bbR^n$ (for some $n \in \bbN$) such that the model-generated option prices align closely with market data. Specifically, let $\zeta = (K, T) \in \mathbb{R}_+^2$ denote the strike and maturity representing the contract details. We define the pricing map $\mathcal{P}: \Theta \times \mathbb{R}_+^2 \to \mathbb{R}_+, (\theta, \zeta) \mapsto \calP(\theta, \zeta)$ to demonstrate that the option price is fully determined by the model parameters and the contract specifications. The calibration objective is to solve the following optimization problem:
	\begin{align} \label{eq:objective1} 
		\inf_{\theta \in \Theta} \frac{1}{M}\sum_{m=1}^M\left(\mathcal{P}(\theta, \zeta_m) - P_{MKT}(\zeta_m)\right)^2, 
	\end{align}
	where $M$ represents the number of market data points. Addressing the calibration task typically requires multiple iterations of the model, underscoring the necessity for an efficient pricing methodology. However, the inclusion of the Riemann-Liouville type of fractional Brownian motion, given by
	\begin{align*}
		W^H_t := \sqrt{2H}\int_0^t (t-s)^{H - \frac{1}{2}}\mathrm{d} W_s, 
	\end{align*}
	renders the joint process $(S_t, V_t)$ non-Markov and deprives it of a semi-martingale structure. This characteristic precludes the application of standard PDE-based methods and all techniques relying on the Feynman-Kac representation theorem. While the characteristic 
	function of the rough Heston model can be expressed via a fractional Ricatti equation, enabling classical methods \cite{el2019characteristic}, the rBergomi model lacks this affine structure. Consequently, pricing methods for rBergomi predominantly rely on Monte Carlo simulations. Despite the recent advancements \cite{bennedsen2017hybrid, teng2025efficient}, substantial sampling is still required during the online calibration phase. 
	
	In recent years, data-driven concepts and deep learning approaches using neural networks have been extensively explored to accelerate calibration. Existing deep learning-based calibration schemes generally fall into two categories. The first is the \textit{one-step approach} or direct inverse mapping \cite{hernandez2016model}, where a neural network is trained to learn the map:
	\begin{align*}
		\Phi: \bbR^M_+ \to \Theta,\quad \br{P_{MKT}(\zeta_m)}_{m=1}^M \mapsto \theta,
	\end{align*}
	taking historical data as input and model parameters $\theta$ as the output. Traditional calibration is performed on historical data to generate labeled training pairs:
	\begin{align*}
		\br{\br{P_{MKT}^i\br{\zeta_m}}_{m=1}^M, \theta^i}_{i = 1}^{N_{train}}, 
	\end{align*}
	where the number of training pairs $N_{train}$ is intrinsically limited by the volume of reliable market data. Once trained, the network can directly output parameters for any given market data, offering maximum efficiency. While straightforward and capable of fitting observations well, this method lacks control and generalizes poorly to unseen data, as notedy by \cite{bayer2019deep, hernandez2016model}. The second category is the \textit{two-step approach} (learning the model, then calibrating to data) \cite{bayer2019deep, horvath2021deepb,liu2019neural, stone2020calibrating, rosenbaum2021deep}. Here, a neural network approximates the pricing function $\calP$, mapping model parameters to vanilla option prices (or implied volatilities) for a prespecified grid of maturities and strikes:
	\begin{align*}
		\Psi: \Theta \to \bbR_+^M,\quad \theta \mapsto ( \tilde{\calP}(\theta, \zeta_m))_{m =1}^{M}.
	\end{align*}	
	Training pairs are synthetically generated, bypassing the limitations of available market data. It has been established that the neural networks can learn the pricing function with an arbitrarily small error $\varepsilon$, while the network size grows only sub-polynomially in $1/\varepsilon$ under suitable assumptions \cite{biagini2024approximation}. With this deterministic approximation, the calibration problem \eqref{eq:objective1} is reformulated as
	\begin{align*}
		\inf_{\theta \in \Theta} \frac{1}{M}\sum_{m=1}^{M}\br{\tilde{\calP}(\theta, \zeta_m) - P_{MKT}(\zeta_m)}^2.
	\end{align*}	
	This approach shifts the computationally intensive numerical approximation to the offline stage, which facilitates faster online calibration processes.  
	
	Both approaches operates within a supervised learning framework, requiring large-scale labeled datasets. Despite promising results, they exhibit three key drawbacks. First, generating training data  relies heavily on Monte Carlo-based methods, which incurs significant computational costs and storage requirements. Second, the trained neural network requires market data to align with a prespecified strike and maturity grid. While it is possible to choose a fine grid, this increases complexity of training pairs generation \cite{horvath2021deepb}. Finally, the complexity and dynamism of financial markets make it difficult to capture all nuances with a single mathematical model. Parameters may follow unknown, application-specific distributions inferred from market data. For instance, the scalar parameter $H$ may vary over time to reflect changing local volatility regularity \cite{corlay2014multifractional}. Since existing approaches presuppose specific parameter forms for training, they may lack the flexibility to adapt to general cases or capture complete market information.
	
	The primary objective of this work is to propose an unsupervised learning-based calibration scheme to address these challenges. While vanilla option prices in classical diffusive stochastic volatility models solve linear parabolic partial differential equations (PDEs), allowing calibration to be viewed as a parameter identification (inverse problem), rough volatility models require a different approach. Due to their non-Markov nature, the corresponding option price follows a backward stochastic partial differential equation (BSPDE). A stochastic Feynman-Kac formula has been established to represent the weak solution of the BSPDE via a BSDE coupled with the forward SDE \cite{bayer2022pricing}, and deep learning schemes have been developed for backward option pricing. Our proposed method, the Deep BSDE scheme, treats model parameters as tunable weights optimized alongside BSDE solutions during training. Inspired by \cite{han2018solving, han2017deep}, our scheme implements a forward Euler method that initiates from market data and matches the terminal condition to solve the BSDE.
	
	The contribution of this work is twofold. First, we propose an unsupervised learning-based calibration scheme for rough volatility models that circumvents the challenge of training pair generation. Unlike purely data-driven methods such as the one-step approach, our scheme incorporates PDE theory to reduce reliance on extensive market data while maintaining high accuracy. Furthermore, it adapts to increasing market data and accommodates more general parameter forms. During training, model parameters and numerical BSDE solutions are learned simultaneously. Second, we demonstrate in Section \ref{sec: convergence} that the discrepancy in option prices implied by the model parameters is bounded by the loss function. Given the market-fitting capability of rough volatility models and the universal approximation property of neural networks, the loss can be effectively controlled. This analysis elucidates the relationship between parameter estimation accuracy and option price discrepancies, providing a deeper understanding of the calibration process.
	
	\begin{table}[htp]
		\centering
		\begin{adjustbox}{max width=\textwidth}
			\begin{tabular}{c cccc}
				\toprule
				Methods & PDE usage & \makecell{Training data \\ usage} & \makecell{Strike and Maturity \\grid} & \makecell{Forms of model \\ parameters}\\
				\hline
				One-step approach & No & Yes & Fixed & Fixed\\
				Two-step approach & No & Yes & Fixed & Fixed \\
				Deep BSDE &  Yes & No & Flexible & Flexible\\
				\bottomrule
			\end{tabular}
		\end{adjustbox}
		\caption{Comparison of deep learning-based calibration schemes.}
		\label{tab: DL concept comparison}
	\end{table}

	The remainder of this paper is organized as follows. Section \ref{sec:bsde} reviews the BSPDE developed in \cite{bayer2022pricing} for rough volatility models. Section \ref{sec:ulc} presents the complete calibration scheme. Convergence analysis is provided in Section \ref{sec: convergence}. In Section \ref{sec: exp}, we present extensive numerical experiments based on the rBergomi model using both synthetic and historical data, benchmarking against the pure Monte Carlo method. The algorithm and all numerical examples are available online at \url{https://github.com/evergreen1002/Calibration-RoughVol-BSDE}. We conclude with remarks and future research directions in Section \ref{sec:conclusion}.
	
	\section{BSPDE model}\label{sec:bsde}
	This section reviews the European option pricing theory for rough volatility models formulated via a Backward Stochastic Partial Differential Equation (BSPDE), as established by \cite{bayer2022pricing}. We define the process $X_s^{t,x} := -rs + \ln S_s$ with the initial state $x :=-rt+ \ln S_t$ for $0\leq t\leq s\leq T$. Following \eqref{eq: underlying}, the dynamics of the log-price are given by:
	\begin{equation}\label{eq: log price}
		\de X_s^{t,x} = -\frac{1}{2}V_s\de s + \sqrt{V_s}\br{\rho \de W_s + \sqrt{1 - \rho^2}\de W_s^\perp},\quad X_t^{t,x} =x. 
	\end{equation}
	The price of a European option at time $t$ based on $\eqref{eq: log price}$ with payoff function $h(\cdot)$ is defined as:	
	\begin{equation}\label{eq:martingale-price}
		u_t(x) := \bbE\fa{\e^{-r(T-t)}h\br{\exp(X_T^{t, x} +rT)}|\calF_t},
	\end{equation}
	which is a random field for $(t,x) \in [0, T] \times \bbR$. For a European call option, the payoff function $h(\cdot) = \br{\cdot - K}^+$. Given the non-Markovian nature of the pair $\br{X_t, V_t}$, representing $u_t(x)$ via a conventional deterministic PDE is not feasible. However, \cite{bayer2022pricing} established that the pair of random fields $(u, \psi)$ constitutes a weak solution to the following BSPDE in the sense of \cite[Definition 2.1]{bayer2022pricing}:
	\begin{equation}\label{eq: BSPDE}
		\begin{aligned}
			-\de u_t(x) &= \left(\frac{V_t}{2}D^2u_t(x) + \rho\sqrt{V_t}\psi_t(x) - \frac{V_t}{2}Du_t(x) -ru_t(x)\right)\de t - \psi_t(x)\de W_t,\,\\
			u_T(x) &= G(\e^x), 
		\end{aligned}	
	\end{equation}
	where $G(\e^x) := h(\e^{x + rT})$. To derive the stochastic Feynman-Kac formula presented in \cite{bayer2022pricing} which connects the BSPDE \eqref{eq: BSPDE} to a BSDE, we impose the following condition on $G$.
	\begin{assumption} \label{assump: payoff}
		The function $G: \br{\Omega \times \bbR, \calF_T^W \otimes \calB(\bbR)} \to \br{\bbR, \calB(\bbR)}$ satisfies for some constant $L>0$
		\begin{align*}
			G(x) \leq L(1 + \abs{x}),\quad x \in \bbR.	
		\end{align*}
	\end{assumption}
	\begin{theorem}[Stochastic Feynman-Kac formula {\cite[Theorem 2.4]{bayer2022pricing}}]
		Suppose Assumptions \ref{assump: variance process} and \ref{assump: payoff} hold. Let $\br{u, \psi}$ be a weak solution of the BSPDE $\eqref{eq: BSPDE}$ such that there is a constant $C_0 \in (0, \infty)$ satisfying for each $t \in [0, T]$
		\begin{align*}
			\abs{u_t(x)} \leq C_0\br{1 + \e^x}\quad \text{for almost all} \br{\omega, x} \in \Omega \times \bbR. 
		\end{align*}
		Then the following holds a.s., 
		\begin{equation}\label{eq: stochastic Feynman Kac}
			\begin{aligned}
				u_s(X_s^{t,x}) &= Y_s^{t, x}, \\
				\sqrt{(1 - \rho^2)V_s}Du_s(X_s^{t, x}) &= Z_s^{t,x},\\
				\psi_s(X_s^{t,x}) + \rho\sqrt{V_s}Du_s(X_s^{t, x}) &= \tilde{Z}_s^{t, x},  
			\end{aligned}
		\end{equation}
		for $0\leq t \leq s \leq T$ and $x \in \bbR$, where $(Y_s^{t, x}, Z_s^{t,x}, \tilde{Z}_s^{t, x})$ is the unique solution of the following BSDE in the sense of \cite[Definition 2.1]{briand2003lp}
		\begin{equation}\label{eq: BSDE}
			\begin{aligned}
				-\de Y_s^{t, x} = -rY_s^{t,x}\de s  - \tilde{Z}_s^{t, x}\de W_s - Z_s^{t,x}\de W_s^\perp,\quad 
				Y_T^{t, x} = G(\e^{X_T^{t,x}}). 
			\end{aligned}    
		\end{equation}
	\end{theorem}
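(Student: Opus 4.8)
The plan is to identify $Y_s^{t,x}$ with the composition $u_s(X_s^{t,x})$ of the random field solving the backward SPDE \eqref{eq: BSPDE} with the forward semimartingale \eqref{eq: log price}, and to derive the BSDE \eqref{eq: BSDE} as the It\^o dynamics of this composition, reading off $Z_s^{t,x}$ and $\tilde Z_s^{t,x}$ from its martingale part. Since $u$ and $X^{t,x}$ are both driven by the same Brownian motion $W$ (with $X^{t,x}$ additionally driven by the independent $W^\perp$), the natural tool is the It\^o--Wentzell (It\^o--Kunita) formula for the composition of a time-dependent random field with a continuous semimartingale. I would first record the dynamics in differential form: from \eqref{eq: BSPDE}, $\de u_s(x)=-\big(\tfrac{V_s}{2}D^2u_s(x)+\rho\sqrt{V_s}\,D\psi_s(x)-\tfrac{V_s}{2}Du_s(x)-ru_s(x)\big)\de s+\psi_s(x)\,\de W_s$, and from \eqref{eq: log price}, $\de X_s^{t,x}=-\tfrac12 V_s\,\de s+\sqrt{V_s}\,(\rho\,\de W_s+\sqrt{1-\rho^2}\,\de W_s^\perp)$, so that $\de\langle X^{t,x}\rangle_s=V_s\,\de s$ and $\de\langle W,X^{t,x}\rangle_s=\rho\sqrt{V_s}\,\de s$.

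Applying the It\^o--Wentzell formula to $u_s(X_s^{t,x})$ produces the SPDE drift and SPDE martingale term (both evaluated at $X_s^{t,x}$), the transport term $Du_s(X_s^{t,x})\,\de X_s^{t,x}$, the second-order It\^o term $\tfrac{V_s}{2}D^2u_s(X_s^{t,x})\,\de s$, and the covariation correction $\rho\sqrt{V_s}\,D\psi_s(X_s^{t,x})\,\de s$ between the $W$-martingale parts of $u$ and $X^{t,x}$. The computation then relies on three cancellations: the second-order It\^o term with the $D^2u$ piece of the SPDE drift, the drift of the transport term with the $Du$ piece, and the covariation correction with the $D\psi$ piece; what survives in the drift is $r\,u_s(X_s^{t,x})\,\de s$. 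Reorganising the stochastic integrals, the $\de W_s$-coefficient is $\psi_s(X_s^{t,x})+\rho\sqrt{V_s}\,Du_s(X_s^{t,x})$ and the $\de W_s^\perp$-coefficient is $\sqrt{(1-\rho^2)V_s}\,Du_s(X_s^{t,x})$, which are precisely the quantities $\tilde Z_s^{t,x}$ and $Z_s^{t,x}$ of \eqref{eq: stochastic Feynman Kac}; hence $-\de\big(u_s(X_s^{t,x})\big)=-r\,u_s(X_s^{t,x})\,\de s-\tilde Z_s^{t,x}\,\de W_s-Z_s^{t,x}\,\de W_s^\perp$. Together with $u_T(X_T^{t,x})=G(\e^{X_T^{t,x}})$ from the terminal condition of \eqref{eq: BSPDE}, this exhibits $(u_s(X_s^{t,x}),Z_s^{t,x},\tilde Z_s^{t,x})$ as a solution of \eqref{eq: BSDE}.

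To conclude, I would invoke existence and uniqueness for \eqref{eq: BSDE}: the driver $(s,y)\mapsto -ry$ is linear, hence Lipschitz with a deterministic constant, and by \cref{assump 2} together with the integrability \eqref{eq: regularity of V 1} of $V$ and the resulting moment control on $\e^{X_T^{t,x}}$, the terminal value $G(\e^{X_T^{t,x}})$ lies in the $L^p$ class required by \cite[Definition 2.1]{briand2003lp}. Therefore \eqref{eq: BSDE} has a unique solution, with which the adapted triple constructed above must coincide, yielding \eqref{eq: stochastic Feynman Kac} a.s.\ for all $0\le t\le s\le T$ and $x\in\bbR$. One must also verify that $Z_s^{t,x},\tilde Z_s^{t,x}$ as defined belong to the solution space; this follows from the weighted a priori estimates attached to the notion of weak solution of \eqref{eq: BSPDE} in \cite[Definition 2.1]{bayer2022pricing}.

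The main obstacle is the rigour of the It\^o--Wentzell step, because a \emph{weak} solution $(u,\psi)$ of \eqref{eq: BSPDE} need not possess pointwise spatial derivatives $Du,D^2u,D\psi$; these exist only in a Sobolev/distributional sense, so both the composition with $X^{t,x}$ and the formal cancellations above require justification. The standard remedy I would follow is to mollify $u$ and $\psi$ in the space variable, apply the classical It\^o--Wentzell formula to the smooth approximations, use the weak formulation of \eqref{eq: BSPDE} to rewrite the drift, and pass to the limit; here the linear-growth bound on $u$, the integrability \eqref{eq: regularity of V 1}, the continuity and integrability of the stochastic flow $x\mapsto X_s^{t,x}$, and the weighted a priori estimates on $Du$ and $\psi$ are what make every term converge and also permit upgrading ``for almost every $x$'' to ``for every $x$'' via continuity in the initial datum. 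Controlling and passing to the limit in $\int_t^T V_s D^2u_s(X_s^{t,x})\,\de s$ and in the covariation term after mollification is the technical heart of the argument.
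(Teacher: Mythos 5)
This theorem is quoted directly from \cite[Theorem 2.4]{bayer2022pricing}; the paper gives no proof of its own beyond the citation, and your proposal faithfully reconstructs the argument of that reference: apply the It\^o--Wentzell formula to $u_s(X_s^{t,x})$, use the three cancellations so that only the drift $r\,u_s(X_s^{t,x})\,\de s$ survives, identify $\tilde Z^{t,x}$ and $Z^{t,x}$ from the $\de W$- and $\de W^\perp$-coefficients, and conclude by uniqueness of the linear BSDE with $L^p$ terminal data, with the regularity issue handled exactly as you suggest (the reference uses Krylov's distribution-valued It\^o--Wentzell formula, which packages the mollification step). One substantive remark in your favour: the cross term in \eqref{eq: BSPDE} as displayed reads $\rho\sqrt{V_t}\,\psi_t(x)$, whereas your computation (correctly, and in agreement with the original reference) uses $\rho\sqrt{V_t}\,D\psi_t(x)$; without the spatial derivative the covariation correction $\rho\sqrt{V_s}\,D\psi_s(X_s^{t,x})\,\de s$ produced by It\^o--Wentzell would not cancel and the stated conclusion would fail, so the displayed equation appears to have dropped a $D$.
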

	
	Without loss of generality, we set the initial time $t \equiv 0$ and $x_0 := \ln s_0$. Consequently, the system comprising \eqref{eq:martingale-price} and \eqref{eq: BSDE} yields the following decoupled Forward-Backward SDE (FBSDE):
	\begin{equation}\label{eq: FBSDE}
		\begin{aligned}			
			\de X_s &= -\frac{1}{2}V_s\de s + \sqrt{V_s}\br{\rho \de W_s + \sqrt{1 - \rho^2}\de W_s^\perp},&& X_0 = x_0,\\
			-\de Y_s &= -rY_s\de s - \tilde{Z}_s\de W_s - Z_s\de W_s^\perp,&&
			Y_T = G(\e^{X_T}). 
		\end{aligned}
	\end{equation}
	Here and in the sequel, we drop the superscripts to simplify the notation. This stochastic Feynman-Kac formula implies the uniqueness of a weak solution to the BSPDE \eqref{eq: BSPDE} which together with the existence presented in \cite[Theorem 2.5]{bayer2022pricing}, ensures the problem is well-posed. By the first equation in $\eqref{eq: stochastic Feynman Kac}$, the solution $u$ of the BSPDE corresponds to a specific solution of the BSDE \eqref{eq: BSDE}. Therefore, the option pricing problem \eqref{eq:martingale-price} is equivalent to solving the associated BSDE. 
	\begin{remark}[Functional dependence of BSDE solutions]\label{rm:variables}
		In light of \eqref{eq: stochastic Feynman Kac} and the fact that $Y_{t}^{t, x} \in \mathcal{F}^W_t$ \cite[Theorem 2.2]{bayer2022pricing}, $Y_s$ can be interpreted as a functional dependent on $ (\br{V_u}_{u \in [0, s]}, X_s)$. A similar interpretation applies to $Z_s$ and $\tilde{Z}_s$. Alternative input configuration include $((W_u)_{u\in [0, s]}, X_s)$, $((W^H_u)_{u \in [0, s]}, X_s)$ or $((W_u)_{u \in [0, s]}, (W^H_u)_{u \in [0,s]}, X_s)$ as discussed in \cite[Remark 4.3]{bayer2022pricing}. \footnote{We thank Jinniao Qiu for the comment regarding possible inputs.}
	\end{remark}

	\section{Unsupervised Learning-based Calibration method}\label{sec:ulc}
	In this section, we present the proposed unsupervised learning-based calibration scheme for rough volatility models. Let $\theta$ denote the model parameters, and let $\zeta$ represent the set of relevant market parameters, defined as
	\begin{align*}
		\zeta := \br{K_{\ell}, T_j}_{\substack{\ell = 1, \cdots, L\\ j =1, \cdots, N}},
	\end{align*}     
	where each pair $(K{\ell},T_j)$ denotes the strike price and expiry of a European option. We assume the maturities ${T_j}_{j=1}^N$ are sorted in ascending order. Inspired by the methodologies in \cite{han2018solving, han2017deep}, we solve the BSDE in a forward manner using deep learning techniques.

	\subsection{Deep learning based scheme}	
	We begin by discretizing the temporal domain $[0,T]$ (where $T = T_N$) using an equidistant grid $\pi: 0 = t_0 < t_1 < \cdots < t_n = T$ with step size $h := T/n$ and $t_i := ih$. We assume that each maturity $T_j$ coincides with a grid point, such that $T_j = k(j)h$ for a strictly increasing function $k: \mathbb{N} \to \mathbb{N}$. Let $\Delta W_{t_{i}} := W_{t_{i+1}} - W_{t_i}$ and $\Delta W_{t_{i}}^\perp := W_{t_{i+1}}^\perp - W_{t_i}^\perp$ denote the Brownian motion increments for $i = 0, \cdots, n-1$. We explicitly formulate the dependence of $(X_t, V_t)$ on the model parameters $\theta$ and approximate $X_t(\theta)$ in \eqref{eq: FBSDE} using the Euler-Maruyama scheme:
	\begin{equation}\label{eq: euler forward SDE}
		\begin{aligned}	
			X_{t_{i+1}}^\pi(\theta) &= X_{t_{i}}^\pi(\theta) - \frac{1}{2}V_{t_i}^\pi(\theta) h + \rho\sqrt{V_{t_i}^\pi(\theta)}\Delta W_{t_i} + \sqrt{(1 - \rho^2)V_{t_i}^\pi(\theta)}\Delta W_{t_i}^\perp,\quad X_0^\pi &= x_0,
		\end{aligned}    	
	\end{equation}
	where $\br{V_{t_i}^\pi(\theta)}_{i = 1, \cdots, n-1}$ is obtained by the mSOE scheme proposed in \cite{teng2025efficient}.
	
	Next, we solve the BSDE $\eqref{eq: BSDE}$, where $\zeta$ characterizes the terminal condition. Let $\bm{Y}^j$ be the $\bbR^L$-valued stochastic process with expiry $T_j$, satisfying the following BSDE 
	\begin{align*}
		-\de \bm{Y}_t^j &= -r\bm{Y}_t^j\de t - \widetilde{\bm{Z}}_t^j\de W_t - \bm{Z}_t^j\de W_t^\perp,\\
		\bm{Y}_{T_j}^j(\theta) &= \fa{\br{\e^{X_{T_j}(\theta) +rT_j} - K_1}^+, \cdots, \br{\e^{X_{T_j}(\theta) +rT_j} - K_L}^+}^\top,
	\end{align*}
	with $\tilde{\bm{Z}}^j, \bm{Z}^j \in \bbR^L$ for $j = 1, \cdots, N$. We explicitly express the payoff function $G(\cdot)$ to highlight its dependence on the strike, focusing specifically on call options. For notational simplicity, we denote $X_i$ for $X_{t_i}$, $V_i$ for $V_{t_i}$, $\bm{Y}^j_i$ for  $\bm{Y}^j_{t_i}$, $\bm{Z}_{i}^j$ for $\bm{Z}_{t_i}^j$ and $\tilde{\bm{Z}}_i^j$ for $\tilde{\bm{Z}}_{t_i}^j$ in the sequel. We employ the Euler-Maruyama scheme to approximate $\bm{Y}_i^j$, setting the market option prices as the initial values:
	\begin{align*}
		\bm{Y}_0^{j, \pi} &= \fa{P_{MKT}\br{K_1, T_j}, \cdots, P_{MKT}\br{K_L, T_j}}^\top, \\
		\bm{Y}_{i+1}^{j, \pi}(\theta) &= \br{1 + rh}\bm{Y}_{i}^{j, \pi}(\theta) + \tilde{\bm{Z}}_{i}^{j, \pi}(\theta)\Delta W_{t_i} + \bm{Z}_i^{j, \pi}(\theta)\Delta W_{t_i}^\perp, 	
	\end{align*}
	for $i = 0, \cdots, k(j)-1$. 
	To aggregate the scheme for all $j = 1, \cdots, N$ into a matrix format, we set the initial value of the numerical BSDE solution as:
	\begin{align}\label{eq:initial-bsde}
		\bbY_0^\pi := [\bm{Y}_0^{1,\pi}, \cdots, \bm{Y}_0^{N,\pi}] = [(P_{MKT}(K_\ell, T_j))]_{L\times N}\in \bbR^{L\times N}.
	\end{align}
	Similar, we implement forward Euler scheme to get $\bbY_1^\pi, \cdots, \bbY_i^\pi$. When $i = k(1)$, i.e. $t_i = T_1$, options with maturity $T_1$ expire. Consequently, for $i = k(1)+1$, $\mathbb{Y}_i^\pi$ excludes these options, and its size is reduced to ${L\times(N-1)}$. To be precise, let $j\br{i}  := \min\hua{j: k(j) > i}$ such that $T_{j(i)-1}<t_i\leq T_{j(i)}$, we then have 
	\begin{align}\label{eq: Y matrix}
		\bbY_i^\pi(\theta) := [\boldsymbol{Y}^{j(i), \pi}_i(\theta), \cdots, \boldsymbol{Y}^{N, \pi}_i(\theta)] \in \bbR^{L \times (N - j(i) + 1)}.
	\end{align}
	Following Remark \ref{rm:variables}, we interpret ${\tilde{\bm{Z}}}_i^{j,\pi}(\theta)$ and $\bm{Z}_i^{j,\pi}(\theta)$ as $\mathbb{R}^L$-valued functions of $\br{V_0, V_1^\pi(\theta), \cdots, V_{i}^\pi(\theta), X_{i}^\pi(\theta)}$ and employ neural networks as function approximators. In light of \eqref{eq: Y matrix}, we define
	\begin{equation}\label{eq: NN approx}
		\begin{aligned}
			\bbZ_i^\pi(\theta; \nu) :&= \fa{\bm{Z}_i^{j\br{i}, \pi}(\theta; \nu), \bm{Z}_i^{j\br{i}+1, \pi}(\theta; \nu), \cdots, \bm{Z}_i^{N, \pi}(\theta; \nu)} \\
			&= \mu_i\br{V_0, V_1^\pi(\theta), \cdots, V_{i}^\pi(\theta), X_{i}^\pi(\theta); \nu} \in  \bbR^{L\times \br{N - j(i) + 1}},\\
			\widetilde{\bbZ}_i^\pi (\theta; \nu)&:= \fa{\widetilde{\bm{Z}}_i^{j\br{i}, \pi}(\theta; \nu), \widetilde{\bm{Z}}_i^{j\br{i}+1, \pi}(\theta; \nu), \cdots, \widetilde{\bm{Z}}_i^{N, \pi}(\theta)} \\
			& = \phi_i\br{V_0, V_1^\pi(\theta), \cdots, V_{i}^\pi(\theta), X_{i}^\pi(\theta); \nu} \in  \bbR^{L\times \br{N - j(i) + 1}},
		\end{aligned}
	\end{equation}
	where $\mu_i\br{\cdots;\nu}, \phi_i\br{\cdots;\nu} \in \calN\calN_{i+1, L\times\br{N - j(i)+1}}$ are neural networks with input dimension $i + 1$, output dimension $L \times \br{N - j(i) + 1}$ and $\nu$ representing the set of trainable weights. Using this ansatz, we obtain:
	\begin{align}\label{eq: forward Euler BSDE}
		\bbY_{i+1}^\pi(\theta; \nu) = (1 + rh)\bbY_i^{\pi}(\theta;\nu) + \tilde{\bbZ}^{\pi}_i(\theta; \nu)\Delta W_{t_i} + \bbZ^\pi_i(\theta;\nu)\Delta W_{t_i}^\perp.
	\end{align}	
	By matching the terminal conditions of the forward SDE and the BSDE, we define the loss function as:
	\begin{align}\label{eq: loss}
		\calL\br{\theta; \nu} := \frac{1}{LN}\sum_{j= 1}^N\bbE\fa{\abs{\boldsymbol{G}\br{\e^{X^\pi _{T_j}\br{\theta}}} - \bm{Y}_{T_j}^{j, \pi}\br{\theta; \nu}}^2},
	\end{align}
	where $|x|$ denotes the Euclidean norm of $x \in \bbR^L$ and 
	\begin{align*}
		\boldsymbol{G}\br{\e^{X^\pi_{T_j}\br{\theta}}} := \fa{\br{\e^{X_{T_j}^\pi\br{\theta} +rT_j} - K_1}^+, \cdots, \br{\e^{X_{T_j}^\pi\br{\theta} +rT_j} - K_L}^+}^\top.
	\end{align*} 
	
	\subsection{Calibration task}
	The calibration objective is to identify the model parameters $\theta$ that minimize the discrepancy between the model-implied option prices $\bbY_0\br{\theta} = [\bm{Y}_0^{1}\br{\theta}, \cdots, \bm{Y}_0^{N}\br{\theta}] \in \bbR^{L \times N}$ and the market prices $\bbP_{MKT} = [(P_{MKT}(K_\ell, T_j))]_{L\times N} \in \bbR^{L \times N}$. This discrepancy is quantified by the objective function
	\begin{align*}
		F\br{\theta} := \frac{1}{LN}\left\|\bbY_0\br{\theta} - \bbP_{MKT}\right\|_F^2,
	\end{align*}
	where $\|\cdot\|_F$ denotes the Frobenius norm of a matrix. Given \eqref{eq:initial-bsde}, the initial numerical value $\mathbb{Y}_0^\pi$ coincides with $\mathbb{P}_{MKT}$, allowing us to reformulate $F(\theta)$ as:
	\begin{align}\label{eq: objective 2}
		F(\theta) = \frac{1}{LN}\left\|\mathbb{Y}_0(\theta) - \mathbb{Y}_0^{\pi}\right\|_F^2.
	\end{align}
	Hence, the calibration task \eqref{eq:objective1} is equivalent to the following optimization problem
	\begin{align*}
		\inf\limits_{{\theta \in \Theta}} F\br{\theta},
	\end{align*}
	where $\Theta$ is a suitable parametric space. Unlike traditional calibration schemes that rely on an explicit pricing map for $\mathbb{Y}_0(\theta)$, Theorem \ref{thm: thm1} establishes that $F(\theta)$ is uniformly bounded by $\mathcal{L}(\theta;\nu)$. Therefore, we reformulate the calibration procedure as the following joint optimization problem
	\begin{align}\label{eq: loss-total}
		\min\limits_{\substack{\theta \in \Theta\\\mu_i(\cdot;\nu), \phi_i(\cdot;\nu) \in \calN\calN}}\calL\br{\theta;\nu}.
	\end{align}
	The proposed calibration scheme is summarized in Algorithm \ref{alg: calibration_1}.
	
	\begin{algorithm}	
		\caption{Deep BSDE-based calibration scheme}
		\label{alg: calibration_1}
		\begin{algorithmic}	
			\Require{Time grid $0 = t_0 < t_1 < \cdots < t_n = T$; \\
				Market parameter $\zeta$;\\
				Market data $\bbP_{MKT}$; \\
				(Adaptive) learning rate $\alpha$; \\
				Initial guess $\theta_0$;}	          
			\State{\textbf{Initialization}:
				$\theta = \theta_0$;}
			\While{\text{not converge}}
			\State{Generate sample paths $\br{X_{i}^\pi(\theta), V_{i}^\pi(\theta), \Delta W_{t_i}, \Delta W_{t_i}^\perp}_{i = 1, \cdots, n}$ by $\eqref{eq: euler forward SDE}$}
			\State{Set $\bbY_0^\pi$ by \eqref{eq:initial-bsde}}
			\For{$i = 0, \cdots, n-1$}
			\State{Approximate $\mathbb{Z}_i^\pi(\theta;\nu)$ by $\mu_i(V_0, V_{1}^\pi(\theta),\cdots, V_{i}^\pi(\theta), X_{i}^\pi(\theta); \nu)$} 
			\State{Approximate $\widetilde{\mathbb{Z}}_i^\pi(\theta;\nu)$ by $\phi_i(V_0, V_{1}^\pi(\theta),\cdots, V_{i}^\pi(\theta), X_{i}^\pi(\theta); \nu)$}				
			\State{Get $\mathbb{Y}_{i+1}^\pi(\theta; \nu)$ by \eqref{eq: forward Euler BSDE}}
			\EndFor
			\State{Evaluate the loss function $\calL(\theta;\nu)$}		
			\State{Update $\theta \gets \theta - \alpha\nabla_{\theta}\calL$} 
			\State{Update $\nu \gets \nu -\alpha\nabla_{\nu}\calL$}	
			\EndWhile
			\State \Return{$\theta$}	
		\end{algorithmic}
	\end{algorithm}
	
	\section{Convergence analysis} \label{sec: convergence}
	In this section, we derive estimates for the target function $F\br{\theta} $ and the loss $\calL\br{\theta;\nu}$. For simplicity, we restrict our analysis to the case where $L = N = 1$ and $\zeta = {K, T}$. Recall that equations \eqref{eq: euler forward SDE}, \eqref{eq:initial-bsde}, \eqref{eq: NN approx} and \eqref{eq: forward Euler BSDE} yield the following discrete scheme for $i = 0, \cdots, n-1$,
	\begin{equation}\label{eq: euler FBSDE}
		\begin{aligned}		
			X_{t_{i+1}}^\pi(\theta) &= X_{t_i}^\pi(\theta) - \frac{1}{2}V_{t_i}^\pi(\theta) h + \rho\sqrt{V_{t_i}^\pi(\theta)}\Delta W_{t_i} + \sqrt{\br{1 - \rho^2}V_{t_i}^\pi(\theta)}\Delta W_{t_i}^\perp,\quad X_0^\pi = x_0, \\		
			Z_{t_i}^\pi(\theta;\nu) &= \mu_i\br{V_0, V_{t_1}^\pi(\theta),\cdots, V_{t_i}^\pi(\theta), X_{t_i}^\pi(\theta); \nu}, \\ 
			\tilde{Z}_{t_i}^\pi (\theta;\nu)&= \phi_i\br{V_0, V_{t_1}^\pi(\theta),\cdots, V_{t_i}^\pi(\theta), X_{t_i}^\pi(\theta); \nu},\\
			Y_{t_{i+1}}^\pi(\theta;\nu) &= \br{1 + rh}Y_{t_i}^\pi(\theta;\nu) + \tilde{Z}_{t_i}^\pi(\theta;\nu) \Delta W_{t_i} + Z_{t_i}^\pi (\theta;\nu)\Delta W_{t_i}^\perp,\quad Y_0^\pi(\zeta) = P_{MKT}.		
		\end{aligned} 
	\end{equation}
	Observe that $Y_{t_i}^\pi(\theta;\nu), Z_{t_i}^\pi(\theta;\nu), \tilde{Z}_{t_i}^\pi (\theta;\nu)$ are $\mathcal{F}_{t_i}$-measurable. In the sequel, let $C\in (0, \infty)$ denote a generic constant independent of $h$, the value of which may vary from line to line. We define the model parameter space $\Theta = \hua{\theta = (\theta_1, \cdots, \theta_n) \in \mathbb{R}^n | \theta_i \in [\theta_i^l, \theta_i^u], 1 \leq i \leq n}$ as a bounded set, where $n \in \mathbb{N}$ depends on the specific model and parameterization of model parameters. For any $0 \leq t_1 \leq t_2 \leq T$, \cite[Remark A.1]{bayer2022pricing} establishes that 
	\begin{align}\label{eq: regularity of V 2}
		\bbE\left[\abs{V_{t_2}(\theta) - V_{t_1}(\theta)}\right] + \bbE\fa{\int_{t_1}^{t_2} V_s(\theta) \de s} + \bbE\fa{\br{\int_{t_1}^{t_2} V_s(\theta) \de s}^2} \leq f(\abs{t_2 - t_1}),
	\end{align}
	where $f(t) = Ct^H$ for some constant $C$, reflecting the path properties of Volterra processes with a fractional kernel. Under Assumption \ref{assump: variance process} and $\eqref{eq: regularity of V 2}$, the strong solution to the SDE $\eqref{eq: log price}$ satisfies \cite[Equation A.1]{bayer2022pricing}
	\begin{align}\label{eq: boundedness of X}
		\bbE\fa{\sup_{t \in [0, T]}\abs{X_t(\theta)}^2} &\leq C\br{1 + x_0^2}.
	\end{align}
	Futhermore, the Euler-Maruyama approximation given by $\eqref{eq: euler forward SDE}$ has the following error estimate \cite[Equation A.2]{bayer2022pricing}
	\begin{align}\label{eq: err of X}
		\max_{i = 0, \cdots, n-1}\bbE\fa{\abs{X_{t_{i+1}}(\theta) - X_{t_{i+1}}^\pi(\theta)}^2 + \sup_{t \in [t_i, t_{i+1}]} \abs{X_t(\theta) - X_{t_i}^\pi(\theta)}^2} &\leq Cf(h).
	\end{align}
	Assumptions \ref{assump: variance process}, \ref{assump: payoff} and \eqref{eq: regularity of V 2} imply the existence and uniqueness of an adapted $L^2$-solution $(Y, Z, \tilde{Z})$ to the BSDE $\eqref{eq: BSDE}$. This, combined with $\eqref{eq: boundedness of X}$ establishes $L^2$-regularity result of $Y$ \cite[Equation A.4]{bayer2022pricing}:
	\begin{align}
		\max_{i = 0, \cdots, n-1} \bbE\fa{\sup_{t \in [t_i, t_{i+1})}\abs{Y_t(\theta) - Y_{t_i}(\theta)}^2} \leq  C_1h, \label{eq: L^2 of Y}
	\end{align}
	for a constant $C_1 \in (0, \infty)$. In the following theorem, we aim to bound the numerical error of the BSDE solutions, including the target $F(\theta) = (Y_0(\theta) - Y_0^\pi(\theta))^2$, in terms of the temporal discretization error, the numerical error of the payoff and the loss function
	\begin{align}\label{eq: loss_2}
		\calL(\theta;\nu) =\bbE\fa{\abs{G\br{\e^{X_T^\pi(\theta)}} - Y_T^\pi(\theta;\nu)}}^2.
	\end{align}
	\begin{theorem}\label{thm: thm1}
		Suppose Assumptions \ref{assump: variance process} and \ref{assump: payoff} hold. Then there exists  $C>0$, depending on $r$, $T$ and $C_1$ but independent of $h$ such that 
		\begin{equation}\label{eq: simulation error}
			\begin{aligned}
				&\quad\sup_{t \in [0, T]}\bbE\fa{\abs{Y_t(\theta) - Y_t^\pi(\theta;\nu)}^2} + \bbE\fa{\int_0^T\abs{Z_t(\theta) - Z_t^\pi(\theta;\nu)}^2\de t} + \bbE\fa{\int_0^T\abs{\tilde{Z}_t(\theta) - \tilde{Z}_t^\pi(\theta;\nu)}^2 \de t} \\ &\leq C\br{h + \bbE\fa{\abs{G\br{\e^{X_T(\theta)}} - G\br{\e ^{X_T^\pi(\theta)}}}^2} + \calL(\theta; \nu)},			
			\end{aligned}
		\end{equation}
		where $Y_t^\pi(\theta;\nu) = Y_{t_i}^\pi(\theta;\nu)$, $Z_t^\pi(\theta;\nu) = Z_{t_i}^\pi(\theta;\nu)$ and $\tilde{Z}_t^\pi(\theta;\nu) = \tilde{Z}_{t_i}^\pi(\theta;\nu)$ for $t \in [t_i, t_{i+1}), i = 0, \cdots, n-1$. 	
	\end{theorem}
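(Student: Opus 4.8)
The plan is a linear-BSDE a posteriori estimate in the spirit of the deep BSDE analysis \cite{han2017deep}: the generator of the BSDE \eqref{eq: BSDE} is the linear map $y\mapsto -ry$ with no $z$-dependence, so no comparison with an auxiliary nonlinear scheme is needed, and the whole argument reduces to an It\^o/Gronwall computation anchored at the terminal mismatch. First I would lift the discrete triple $\br{Y^\pi,Z^\pi,\tilde Z^\pi}$ from \eqref{eq: euler FBSDE} to continuous time: on $[t_i,t_{i+1}]$ set $\widetilde Y^\pi_t := Y^\pi_{t_i} + \int_{t_i}^t rY^\pi_{t_i}\de s + \int_{t_i}^t \tilde Z^\pi_{t_i}\de W_s + \int_{t_i}^t Z^\pi_{t_i}\de W^\perp_s$, so that $\widetilde Y^\pi$ is a continuous It\^o process with $\widetilde Y^\pi_{t_{i+1}}=Y^\pi_{t_{i+1}}$ and with diffusion coefficients $\tilde Z^\pi_t,Z^\pi_t$ (the piecewise-constant versions appearing in the statement). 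Writing $\delta Y_t:=Y_t-\widetilde Y^\pi_t$, $\delta Z_t:=Z_t-Z^\pi_t$, $\delta\tilde Z_t:=\tilde Z_t-\tilde Z^\pi_t$ and $\underline s:=t_i$ for $s\in[t_i,t_{i+1})$, one obtains $\de(\delta Y_t)=r(Y_t-Y^\pi_t)\de t+\delta\tilde Z_t\de W_t+\delta Z_t\de W^\perp_t$. Before the main estimate I would record an a priori $L^2$ control of the scheme, which is needed because the neural nets $Z^\pi,\tilde Z^\pi$ carry no built-in bound: taking conditional expectations in the last line of \eqref{eq: euler FBSDE} and using that $\Delta W_{t_i},\Delta W^\perp_{t_i}$ are centred, independent and orthogonal given $\calF_{t_i}$ gives $\bbE|Y^\pi_{t_{i+1}}|^2=(1+rh)^2\bbE|Y^\pi_{t_i}|^2+h\bbE\br{|\tilde Z^\pi_{t_i}|^2+|Z^\pi_{t_i}|^2}$, which telescopes to $h\sum_i\bbE\br{|\tilde Z^\pi_{t_i}|^2+|Z^\pi_{t_i}|^2}\le C\bbE|Y^\pi_T|^2$ and $\max_i\bbE|Y^\pi_{t_i}|^2\le C\bbE|Y^\pi_T|^2$. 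If $\calL(\theta,\nu)=\infty$ there is nothing to prove; otherwise $\bbE|Y^\pi_T|^2\le 2\calL(\theta,\nu)+2\bbE|G(\e^{X^\pi_T})|^2<\infty$, the last term being finite by the linear growth in \cref{assump 2} together with a moment bound for $\e^{X^\pi_T}$ stemming from the integrability of $V$, \eqref{eq: regularity of V 1}--\eqref{eq: regularity of V 2}. These bounds also make the stochastic integrals below true martingales after the usual localisation.

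Next I would apply It\^o's formula to $|\delta Y_t|^2$ on $[t,T]$ and take expectations; the martingale terms vanish by the previous step, leaving
\[ \bbE|\delta Y_t|^2 + \bbE\fa{\int_t^T\br{|\delta Z_s|^2+|\delta\tilde Z_s|^2}\de s} = \bbE|\delta Y_T|^2 - 2r\,\bbE\fa{\int_t^T \delta Y_s\br{Y_s-Y^\pi_s}\de s}. \]
This identity is the heart of the matter: the $Z$- and $\tilde Z$-errors appear on the left with a $+$ sign, coming from the quadratic-variation term, so that an unconstrained choice of $Z^\pi,\tilde Z^\pi$ is automatically ``paid for'' through the terminal mismatch $\bbE|\delta Y_T|^2$. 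I would then split $Y_s-Y^\pi_s=\delta Y_s+\br{\widetilde Y^\pi_s-Y^\pi_s}$; the $\delta Y_s$ contribution yields $\bbE|\delta Y_s|^2$ terms absorbed by Gronwall, while Young's inequality on the remainder produces $\bbE|\widetilde Y^\pi_s-Y^\pi_s|^2$, which integrates — using the definition of $\widetilde Y^\pi$ and the a priori bounds above — to $\bbE\fa{\int_0^T|\widetilde Y^\pi_s-Y^\pi_s|^2\de s}\le Ch\br{1+\bbE|Y^\pi_T|^2}\le C\br{h+\calL(\theta,\nu)}$ for $h$ small. The backward Gronwall lemma in $t$ then gives $\sup_t\bbE|\delta Y_t|^2\le C\br{\bbE|\delta Y_T|^2+h+\calL(\theta,\nu)}$, and re-inserting this bound into the displayed identity controls the $\delta Z$ and $\delta\tilde Z$ integrals as well.

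It remains to treat the terminal term and convert back to the quantities in the statement. Write $\delta Y_T=\br{G(\e^{X_T})-G(\e^{X^\pi_T})}+\br{G(\e^{X^\pi_T})-Y^\pi_T}$: the second bracket squared-in-expectation is exactly the summand of the loss \eqref{eq: loss}, and by Jensen $\bbE\fa{|G(\e^{X^\pi_T})-Y^\pi_T|}^2\le\bbE|G(\e^{X^\pi_T})-Y^\pi_T|^2$, which matches the form of the right-hand side of \cref{thm: thm1}. For the first bracket I would use that the call payoff is Lipschitz in its argument, so $|G(\e^{X_T})-G(\e^{X^\pi_T})|\le\e^{rT}|\e^{X_T}-\e^{X^\pi_T}|$, and bound $\bbE|\e^{X_T}-\e^{X^\pi_T}|^2\le Cf(h)$ by combining the strong Euler error \eqref{eq: err of X} (and higher-moment versions of it) with exponential moments of $X$ and $X^\pi$, obtained from \eqref{eq: boundedness of X} and the $V$-regularity \eqref{eq: regularity of V 2}. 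Finally I would pass from $\sup_t\bbE|Y_t-\widetilde Y^\pi_t|^2$ to the stated $\sup_t\bbE|Y_t-Y^\pi_t|^2$ by adding back $\bbE|\widetilde Y^\pi_t-Y^\pi_t|^2\le Ch\br{1+\calL(\theta,\nu)}$, and wherever a term $\bbE\fa{\int_t^T|Y_s-Y_{\underline s}|^2\de s}$ arose I would bound it by $C_2hT$ using the $L^2$-regularity \eqref{eq: L^2 of Y}. Collecting every contribution yields the claimed estimate with $C=C(r,T,L_0,C_1,C_2)$.

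\textbf{Main obstacle.} Everything downstream is a routine linear It\^o/Gronwall computation; the two genuinely delicate ingredients are (i) the a priori $L^2$-moment bounds for the \emph{scheme}, in particular a bound on $\bbE|\e^{X^\pi_T}|^2$ uniform in $h$, since $Z^\pi,\tilde Z^\pi$ must have their $L^2$-integrability extracted from the scheme itself rather than assumed; and (ii) upgrading the strong error \eqref{eq: err of X}, stated in $L^2$ for $X$, to an $L^2$ estimate for $\e^{X}$, which needs exponential integrability of the rough-volatility forward process and of its Euler approximation. Both rest on the Volterra-kernel regularity \eqref{eq: regularity of V 1}--\eqref{eq: regularity of V 2} and on the analysis of the mSOE scheme in \cite{teng2024neuralOption}.
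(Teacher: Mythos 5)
Your proposal is correct and in substance coincides with the paper's argument: both are terminal-anchored stability estimates for the linear driver $y\mapsto -ry$, recovering the $Z$- and $\tilde Z$-errors from the quadratic-variation/orthogonality identity and decomposing the terminal misfit into the Euler error for $X$ plus the loss. The implementation differs: you interpolate the scheme to a continuous It\^o process and run It\^o plus backward Gr\"onwall, whereas the paper stays entirely discrete, takes conditional expectations $\bbE\fa{\cdot\,|\,\calF_{t_i}}$ to kill the martingale increments, and chooses the Young parameter $\gamma = rh(rh+2)$ so that the resulting recursion for $\bbE\fa{\abs{Y_{t_i}-Y_{t_i}^\pi}^2}$ telescopes exactly without any Gr\"onwall step; the $Z$-bound then comes from the squared-and-conditioned identity \eqref{eq: Y_Z} summed over $i$ with $\lambda=rh(rh+2)$. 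The discrete route buys you freedom from the two auxiliary ingredients your version needs, namely the a priori $L^2$ bounds on $(Y^\pi,Z^\pi,\tilde Z^\pi)$ and the control of the interpolation error $\bbE\fa{\int_0^T\abs{\widetilde Y_s^\pi-Y_s^\pi}^2\de s}$; your route makes explicit where the favourable sign on the $Z$-errors originates and which integrability properties of the neural-network outputs are actually being used. Note finally that the "main obstacle" you flag — upgrading the $L^2$ Euler error \eqref{eq: err of X} for $X$ to an $L^2$ estimate for $\e^{X}$, which requires second exponential moments that are delicate for rBergomi — is not resolved in the paper either: inequality \eqref{eq: terminal misfit} asserts $\bbE\fa{\abs{G\br{\e^{X_T}}-G\br{\e^{X_T^\pi}}}^2}\leq Cf(h)$ directly "from \cref{assump 2} and \eqref{eq: err of X}", so on this point your write-up is, if anything, more candid than the original.
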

	\begin{proof}
		For the sake of notational brevity, we suppress the dependence on $\theta$ and $\nu$ throughout the proof. We begin by expressing the BSDE of $\eqref{eq: FBSDE}$ in the integral form
		\begin{align*}
			Y_{t_i} = Y_{t_{i+1}} - r\int_{t_i}^{t_{i+1}} Y_t \de t - \int_{t_i}^{t_{i+1}}\tilde{Z}_t \de W_t - \int_{t_i}^{t_{i+1}} Z_t\de W_t^\perp.
		\end{align*}
		The last equation of $\eqref{eq: euler FBSDE}$ gives 
		\begin{align*}
			Y_{t_i}^\pi = Y_{t_{i+1}}^\pi -rY_{t_i}^\pi h - \tilde{Z}_{t_i}^\pi \Delta W_{t_i} - Z_{t_i}^\pi \Delta W_{t_i}^\perp.
		\end{align*}
		Subtracting the latter from the former yields: 		
		\begin{align*}
			Y_{t_i} - Y_{t_i}^\pi = Y_{t_{i+1}} - Y^\pi_{t_{i+1}} - r\int_{t_i}^{t_{i+1}}\br{Y_t - Y^\pi_{t_i}}\de t - \int_{t_i}^{t_{i+1}}\br{\tilde{Z}_t - \tilde{Z}_{t_i}^\pi}\de W_t -\int_{t_i}^{t_{i+1}}\br{Z_t - Z_{t_i}^\pi}\de W_t^\perp.
		\end{align*}
		Substituting the decomposition $Y_t - Y_{t_i}^\pi = Y_t - Y_{t_i} + Y_{t_i} - Y_{t_i}^\pi$ into the equation above, we obtain 
		\begin{equation}
			\begin{aligned}\label{eq: splitting}		
				(1 + rh)\br{Y_{t_i} - Y^\pi_{t_i}} &= Y_{t_{i+1}} - Y^\pi_{t_{i+1}} - r\int_{t_i}^{t_{i+1}}\br{Y_t - Y_{t_i}}\de t  \\ 
				& - \int_{t_i}^{t_{i+1}}\br{\tilde{Z}_t - \tilde{Z}^\pi_{t_i}}\de W_t -\int_{t_i}^{t_{i+1}}\br{Z_t - Z^\pi_{t_i}}\de W_t^\perp .
			\end{aligned}
		\end{equation}
		Applying conditional expectation $\bbE\fa{\cdot | \calF_{t_i}}$ to both sides yields
		\begin{align*}
			\br{1 + rh}\br{Y_{t_i} - Y^{\pi}_{t_i}} = \bbE\fa{Y_{t_{i+1}} - Y^{\pi}_{t_{i+1}}\bigg|\calF_{t_i}} - r\bbE\fa{\int_{t_i}^{t_{i+1}} \br{Y_t - Y_{t_i}}\de t \bigg|\calF_{t_i}}.
		\end{align*}
		Using Young's inequality in the form 
		\begin{align}\label{eq:young}
			(a + b)^2 \leq (1 + \gamma)a^2 + \br{1 + \gamma^{-1}}b^2\;\quad\forall a,\, b,\, \gamma \in \bbR,\; \gamma > 0,
		\end{align}
		along with the Cauchy-Schwarz inequality and the $L^2$-regularity of $Y$ given in $\eqref{eq: L^2 of Y}$, we derive 
		\begin{align*}        
			&\quad(1 + rh)^2\bbE\fa{\abs{Y_{t_i} - Y^\pi_{t_i}}^2} \\
			&\leq (1 + \gamma)\bbE\fa{\br{\bbE\fa{Y_{t_{i+1}} - Y^\pi_{t_{i+1}} | \calF_{t_i}}}^2} + \br{1 + \frac{1}{\gamma}}r^2\bbE\fa{\br{\bbE\fa{\int_{t_i}^{t_{i+1}} \br{Y_t - Y_{t_i}}\de t \bigg| \calF_{t_i}}}^2} \\
			&\leq (1 + \gamma)\bbE\left[\abs{Y_{t_{i+1}} - Y^\pi_{t_{i+1}}}^2\right]+ \br{1+ \frac{1}{\gamma}}r^2\bbE\fa{\abs{\int_{t_i}^{t_{i+1}} \br{Y_t - Y_{t_i}}\de t}^2} \\
			&\leq (1 + \gamma)\bbE\left[\abs{Y_{t_{i+1}} - Y^\pi_{t_{i+1}}}^2\right] + C_1\br{1 + \frac{1}{\gamma}}r^2h^3.
		\end{align*}   
		Setting $\gamma := rh(rh + 2)$, we obtain
		\begin{align*}
			\bbE\fa{\abs{Y_{t_i} - Y^\pi_{t_i}}^2} \leq \bbE\left[\abs{Y_{t_{i+1}} - Y^\pi_{t_{i+1}}}^2\right] + \frac{C_1rh^2}{rh + 2}.
		\end{align*}
		Proceeding by induction, for $i = 0, \cdots, n-1$, we have 
		\begin{align*}
			\bbE\left[\abs{Y_{t_i} - Y^\pi_{t_i}}^2\right] \leq \bbE\left[\abs{Y_T - Y_{T}^\pi}^2\right] + \frac{C_1rh^2(n-i)}{rh + 2}. 
		\end{align*}
		Combining this estimate with the $L^2$-regularity of $Y$ and the triangle inequality, we obtain
		\begin{equation}\label{eq: bound for Y}
			\begin{aligned}
				&\quad\sup_{t \in [0, T]}\bbE\fa{\abs{Y_t - Y^\pi_t}^2}
				=\bigg({\max_{i=0,\cdots,n-1}\sup_{t\in[t_i,t_{i+1})}\bbE\fa{\abs{Y_t - Y^\pi_{t_i}}^2}}\bigg) \vee \bbE\fa{\abs{Y_T - Y_T^\pi}^2}\\
				&\leq \bigg({2\max_{i=0,\cdots,n-1}\sup_{t\in[t_i,t_{i+1})}\bbE\fa{\abs{Y_t -Y_{t_i}}^2+\abs{Y_{t_i} -Y^\pi_{t_i}}^2}}\bigg) \vee \bbE\fa{\abs{Y_T - Y_T^\pi}^2}\\		
				&\leq \bigg({2\max_{i=0,\cdots,n-1}\Big(\bbE\Big[{\sup_{t\in[t_i,t_{i+1})}\abs{Y_t -Y_{t_i}}^2}\Big]+\bbE\fa{\abs{Y_{t_i} -Y^\pi_{t_i}}^2}
					\Big)}\bigg) \vee \bbE\fa{\abs{Y_T - Y_T^\pi}^2}\\
				& \leq 2C_1h + \max_{i=0,\cdots,n-1}\frac{2C_1rh^2(n-i)}{rh + 2} +2\bbE\left[\abs{Y_T - Y_T^\pi}^2\right] \\
				& \leq 2(1+rT)C_1h+2\bbE\left[\abs{Y_T - Y_T^\pi}^2\right]. 
			\end{aligned}
		\end{equation}
		Next, we estimate the error in the $Z$ and $\tilde{Z}$ components in $\eqref{eq: simulation error}$. From $\eqref{eq: splitting}$, we have
		\begin{align*}
			&\quad Y_{t_{i+1}} - Y^\pi_{t_{i+1}} - r\int_{t_i}^{t_{i+1}}\br{Y_t - Y_{t_i}}\de t \\		
			&= \br{1 + rh}\br{Y_{t_i} - Y^\pi_{t_i}} + \int_{t_i}^{t_{i+1}}\br{\tilde{Z}_t - \tilde{Z}^\pi_{t_i}}\de W_t + \int_{t_i}^{t_{i+1}}\br{Z_t - Z^\pi_{t_i}}\de W_t^\perp.
		\end{align*}
		Squaring both sides and taking conditional expectations $\bbE\fa{\cdot | \calF_{t_i}}$, we have     
		\begin{align*}
			&\quad\bbE\fa{\br{Y_{t_{i+1}} - Y^\pi_{t_{i+1}} - r\int_{t_i}^{t_{i+1}}\br{Y_t - Y_{t_i}}\de t}^2 \bigg | \calF_{t_i}} \\
			&= \br{1 + rh}^2\br{Y_{t_i} - Y^\pi_{t_i}}^2 + \bbE\fa{\int_{t_i}^{t_{i+1}}\br{\tilde{Z}_t - \tilde{Z}^\pi_{t_i}}^2 \de t\bigg | \calF_{t_i}} + \bbE\fa{\int_{t_i}^{t_{i+1}}\br{Z_t - Z^\pi_{t_i}}^2 \de t \bigg| \calF_{t_i}}.
		\end{align*} 
		Taking expectations and using tower property gives
		\begin{equation}\label{eq: Y_Z}
			\begin{aligned}
				&\quad\bbE\fa{\br{Y_{t_{i+1}} - Y^\pi_{t_{i+1}} - r\int_{t_i}^{t_{i+1}}\br{Y_t - Y_{t_i}}\de t}^2} \\
				&= \br{1 + rh}^2\bbE\fa{\abs{Y_{t_i} - Y^\pi_{t_i}}^2} + \bbE\fa{\int_{t_i}^{t_{i+1}}\br{\tilde{Z}_t - \tilde{Z}^\pi_{t_i}}^2\de t} + \bbE\fa{\int_{t_i}^{t_{i+1}}\br{Z_t - Z^\pi_{t_i}}^2 \de t}.
			\end{aligned}
		\end{equation}
		With the Young's inequality \eqref{eq:young}, for any $\lambda>0$, we have 
		\begin{align*}
			&\quad\bbE\fa{\int_{t_i}^{t_{i+1}}\br{\tilde{Z}_t - \tilde{Z}^\pi_{t_i}}^2\de t} + \bbE\fa{\int_{t_i}^{t_{i+1}}\br{Z_t - Z^\pi_{t_i}}^2 \de t} \\
			&\leq \br{1 + \lambda}\bbE\fa{\abs{Y_{t_{i+1}} - Y^\pi_{t_{i+1}}}^2} + \br{1 + \frac{1}{\lambda}}r^2\bbE\fa{\br{\int_{t_i}^{t_{i+1}}\br{Y_t - Y_{t_i}}\de t}^2} - \br{1 + rh}^2\bbE\fa{\abs{Y_{t_i} - Y^\pi_{t_i}}^2}\\
			& \leq \br{1 + \lambda}\bbE\fa{\abs{Y_{t_{i+1}} - Y^\pi_{t_{i+1}}}^2} - \br{1 + rh}^2\bbE\fa{\abs{Y_{t_i} - Y^\pi_{t_i}}^2} 
			+\br{1 + \frac{1}{\lambda}}C_1r^2h^3. 
		\end{align*}
		We take $\lambda = rh\br{rh + 2}$ and sum the inequalities over $i = 0, \cdots, n-1$: 		
		\begin{equation}\label{eq: bound for Z}
			\begin{aligned}
				&\quad\bbE\fa{\int_0^T\br{\tilde{Z}_t - \tilde{Z}_t^\pi}^2 \de t} + \bbE\fa{\int_0^T\br{Z_t - Z_{t}^\pi}^2 \de t}\\
				&\leq \br{1 + rh}^2\br{\bbE\fa{\abs{Y_T - Y_T^\pi}^2} - \bbE\fa{\abs{Y_0 - Y^\pi_0}^2}} + \br{1 + rh}^2\frac{C_1rhT}{rh + 2} \\
				&\leq \br{1 + rT}^2\br{C_1rhT + \bbE\fa{\abs{Y_T - Y_T^\pi}^2}}.
			\end{aligned}
		\end{equation}	
		Finally, given that $Y_T = G\br{e^{X_T}}$, we decompose the terminal misfit as 
		\begin{equation}\label{eq: terminal misfit}
			\begin{aligned}
				\bbE\fa{\abs{G\br{\e^{X_T}} - Y_T^\pi}^2} &\leq 2\bbE\fa{\abs{G\br{\e^{X_T}} - G\br{\e ^{X_T^\pi}}}^2} + 2\bbE\fa{\abs{G\br{\e ^{X_T^\pi}} - Y_T^\pi}^2} \\
				&\leq C\br{\bbE\fa{\abs{G\br{\e^{X_T}} - G\br{\e ^{X_T^\pi}}}^2} + \calL(\theta;\nu)},
			\end{aligned}
		\end{equation}
		We complete the proof by combining estimates $\eqref{eq: bound for Y}, \eqref{eq: bound for Z}$ and $\eqref{eq: terminal misfit}$. 
	\end{proof}
	
	Next, we demonstrate that the loss \eqref{eq: loss_2} can be arbitarily small given a suitable set of model parameters and the  random function approximation capability of neural networks (see \cite[Proposition 4.2]{bayer2022pricing} for details). We denote the $L^2$-regularity of the pair $(Z, \tilde{Z})$ as follows:
	\begin{align*}
		\varepsilon^Z\br{h} &:= \bbE\fa{\sum_{i = 0}^{n-1}\int_{t_i}^{t_{i+1}}\abs{Z_t(\theta) - \bar{Z}_{t_i}(\theta)}^2\de t},\quad \bar{Z}_{t_i}(\theta) := \frac{1}{h}\bbE\fa{\int_{t_i}^{t_{i+1 }}Z_t(\theta) \de t \bigg| \calF_{t_i}},\\
		\varepsilon^{\tilde{Z}}\br{h} &:= \bbE\fa{\sum_{i = 0}^{n-1}\int_{t_i}^{t_{i+1}}\abs{\tilde{Z}_t(\theta) - \bar{\tilde{Z}}_{t_i}(\theta)}^2\de t},\quad    
		\bar{\tilde{Z}}_{t_i}(\theta) := \frac{1}{h}\bbE\fa{\int_{t_i}^{t_{i+1 }}\tilde{Z}_t(\theta) \de t \bigg| \calF_{t_i}}.
	\end{align*}
	Since $(\bar{Z}(\theta), \bar{\tilde{Z}}(\theta))$ is the $L^2$-projection of $(Z(\theta), \tilde{Z}(\theta))$, $\varepsilon^Z\br{h}$ and $\varepsilon^{\tilde{Z}}\br{h}$ vanish as $h \to 0$. 
	\begin{theorem}\label{thm: thm2}
		Suppose Assumption $\eqref{assump: payoff}$ holds and the parameter space $\Theta$ is compact. Then there exists $C>0$ that is depending on $r, T, \Theta$ but is independent of $h$ and $\theta$, such that for sufficiently small $h$,
		\begin{equation*}
			\begin{aligned}
				\inf_{\substack{\theta \in \Theta\\ \mu_i, \phi_i \in \calN\calN}}\calL(\theta;\nu) &\leq C\br{h + \sup _{\theta \in \Theta}\bbE\fa{\abs{G\br{\e^{X_T(\theta)}} - G\br{\e^{X_T^\pi(\theta)}}}^2} + \inf_{\theta \in \Theta}\abs{Y_0\br{\theta} - Y_0^\pi}^2 } \\
				&+ C\br{\varepsilon^Z\br{h} + \varepsilon^{\tilde{Z}}\br{h} + h\inf_{\mu_i, \phi_i \in \calN\calN}\sum_{i=0}^{n-1}\bbE\fa{\abs{\bar{Z}_{t_i}(\theta) - \mu_i}^2} + \bbE\fa{\abs{\bar{\tilde{Z}}_{t_i}(\theta) - \phi_i}^2}}.
			\end{aligned}
		\end{equation*}
		\begin{proof}
			For brevity, we remove the dependence on $\theta$ and $\nu$ in the proof. We first decompose the loss as
			\begin{equation}\label{eq: thm2 1}
				\begin{aligned}
					\bbE\fa{\abs{G\br{\e^{X_T^\pi}} - Y_T^\pi}^2} &\leq 2\bbE\fa{\abs{G\br{\e^{X_T}} - G\br{\e^{X_T^\pi}}}^2} + 2\bbE\fa{\abs{Y_T - Y_T^\pi}^2}.
				\end{aligned}
			\end{equation}
			By $\eqref{eq: Y_Z}$ and using Young's inequality of the form 
			\begin{align*}
				\br{a + b}^2 \geq (1 - h)a^2 + \br{1 - \frac{1}{h}}b^2 \geq \br{1 - h}a^2 - \frac{1}{h}b^2,
			\end{align*}
			we provide an estimate of $\bbE[ \abs{Y_T - Y_T^\pi}^2]$:
			\begin{align*}
				&\br{1 + rh}^2\bbE\fa{\abs{Y_{t_i} - Y_{t_i}^\pi}^2} + \bbE\fa{\int_{t_i}^{t_{i+1}}\br{\tilde{Z}_{t} - \tilde{Z}^\pi_{t_i}}^2 \de t} + \bbE\fa{\int_{t_i}^{t_{i+1}}\br{Z_{t} - Z_{t_i}^\pi}^2 \de t} \\
				&\geq \br{1 - h}\bbE\fa{\abs{Y_{t_{i+1}} - Y_{t_{i+1}}^\pi}^2} - \frac{1}{h}r^2\bbE\fa{\abs{\int_{t_i}^{t_{i+1}} \br{Y_t - Y_{t_i}}\de t}^2}.
			\end{align*}
			Then Cauchy's inequality and $\eqref{eq: L^2 of Y}$ indicate that 
			\begin{align*}
				&\bbE\fa{\abs{Y_{t_{i+1}} - Y_{t_{i+1}}^\pi}^2} \leq \frac{\br{1 + rh}^2}{1 - h}\bbE\fa{\abs{Y_{t_i} - Y_{t_i}^\pi}^2}\\
				&+ \frac{1}{1-h}\underbrace{\left(r^2C_1h^2 
					+ \bbE\fa{\int_{t_i}^{t_{i+1}}\br{\tilde{Z}_{t} - \tilde{Z}^\pi_{t_i}}^2 \de t} + \bbE\fa{\int_{t_i}^{t_{i+1}}\br{Z_{t} - Z_{t_i}^\pi}^2 \de t}\right)}_{=:g_i}.
			\end{align*}
			By the discrete Gr\"onwall inequality \cite[Proposition 3.2]{emmrich1999discrete}, for sufficiently small $h$ 
			\begin{align*}
				\bbE\fa{\abs{Y_T - Y_T^\pi}^2}
				&\leq \bigg({\frac{\br{1 + rh}^2}{1 - h}}\bigg)^n\bigg({\bbE\fa{\abs{Y_0 - Y_0^\pi}^2} + \frac{1}{1 -h}\sum_{j=0}^{n-1} \bigg({\frac{\br{1 + rh}^2}{1 - h}}\bigg)^{-\br{j+1}}g_j}\bigg)\\
				&\leq \bigg({\frac{\br{1 + rh}^2}{1 - h}}\bigg)^n\bigg({\bbE\fa{\abs{Y_0 - Y_0^\pi}^2} + \sum_{j = 0}^{n-1}g_j}\bigg)\\
				&\leq \e^{2\br{r+1}T}\Bigg(\bbE\fa{\abs{Y_0 - Y_0^\pi}^2} + r^2C_1hT + \sum_{j = 0}^{n-1}\bbE\fa{\int_{t_i}^{t_{i+1}}\br{\tilde{Z}_t - \tilde{Z}_{t_i}^\pi}^2\de t} \\
				&\hspace{5.5em}+ \sum_{j = 0}^{n-1}\bbE\fa{\int_{t_i}^{t_{i+1}}\br{Z_t - Z_{t_i}^\pi}^2\de t}\Bigg).
			\end{align*}
			Note that 
			\begin{align*}
				\sum_{i = 0}^{n-1}\bbE\fa{\int_{t_i}^{t_{i+1}}\br{Z_t - Z_{t_i}^\pi}^2\de t} &\leq 2\sum_{i = 0}^{n-1}\bbE\fa{\int_{t_i}^{t_{i+1}}\br{Z_t - \bar{Z}_{t_i}}^2\de t} 
				+ 2\sum_{i = 0}^{n-1}\bbE\fa{\int_{t_i}^{t_{i+1}}\br{\bar{Z}_{t_i} - Z_{t_i}^\pi}^2 \de t} \\
				&\leq 2\varepsilon^Z\br{h} + 2h\sum_{i=0}^{n-1}\bbE\fa{\abs{\bar{Z}_{t_i} - Z_{t_i}^\pi}^2}.
			\end{align*}
			Similarly, we have 
			\begin{align*}
				\sum_{i = 0}^{n-1}\bbE\fa{\int_{t_i}^{t_{i+1}}\br{\tilde{Z}_t - \tilde{Z}_{t_i}^\pi}^2\de t} \leq 2\varepsilon^{\tilde{Z}}\br{h} + 2h\sum_{i=0}^{n-1}\bbE\fa{\abs{\bar{\tilde{Z}}_{t_i} - \tilde{Z}_{t_i}^\pi}^2}. 
			\end{align*}
			Combining these estimates and taking infimum on both sides give the desired result.	
		\end{proof}
	\end{theorem}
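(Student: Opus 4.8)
The plan is to bound the terminal loss $\bbE\fa{\abs{G\br{\e^{X_T^\pi(\theta)}} - Y_T^\pi(\theta,\nu;\zeta)}^2}$ by first peeling off the Euler error of the forward SDE and then reducing the remainder to an estimate for $\bbE\fa{\abs{Y_T - Y_T^\pi}^2}$. Dropping the arguments $\theta,\nu,\zeta$ as in the proof of \cref{thm: thm1} and using $Y_T = G\br{\e^{X_T}}$, the inequality $\br{a+b}^2\le 2a^2+2b^2$ gives $\bbE\fa{\abs{G\br{\e^{X_T^\pi}} - Y_T^\pi}^2}\le 2\bbE\fa{\abs{G\br{\e^{X_T}} - G\br{\e^{X_T^\pi}}}^2} + 2\bbE\fa{\abs{Y_T - Y_T^\pi}^2}$, whose first term is $\le C f\br{h}$ by exactly the reasoning behind $\eqref{eq: terminal misfit}$ (\cref{assump 2} together with the Euler estimate $\eqref{eq: err of X}$). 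The whole task is therefore to control $\bbE\fa{\abs{Y_T - Y_T^\pi}^2}$. The conceptual twist relative to \cref{thm: thm1} is that the backward recursion used there, which expresses $\bbE\fa{\abs{Y_{t_i}-Y^\pi_{t_i}}^2}$ in terms of the terminal misfit, is of no use here since the scheme is anchored at $Y_0^\pi = P_{MKT}$; instead I would propagate the $Y$-error \emph{forward} in the time index by a discrete Gr\"onwall argument.

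For the forward recursion I would start from the identity $\eqref{eq: Y_Z}$ already derived inside the proof of \cref{thm: thm1}, namely $\bbE\fa{\br{Y_{t_{i+1}} - Y^\pi_{t_{i+1}} - r\int_{t_i}^{t_{i+1}}\br{Y_t - Y_{t_i}}\de t}^2} = \br{1+rh}^2\bbE\fa{\abs{Y_{t_i}-Y^\pi_{t_i}}^2} + \bbE\fa{\int_{t_i}^{t_{i+1}}\br{\tilde{Z}_t - \tilde{Z}^\pi_{t_i}}^2\de t} + \bbE\fa{\int_{t_i}^{t_{i+1}}\br{Z_t - Z^\pi_{t_i}}^2\de t}$. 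To isolate $\bbE\fa{\abs{Y_{t_{i+1}}-Y^\pi_{t_{i+1}}}^2}$ on the left I would use the reverse Young inequality $\br{a+b}^2\ge\br{1-h}a^2-\frac{1}{h}b^2$, and bound the resulting $\frac{r^2}{h}\bbE\fa{\br{\int_{t_i}^{t_{i+1}}\br{Y_t - Y_{t_i}}\de t}^2}$ by Cauchy--Schwarz followed by the $L^2$-regularity $\eqref{eq: L^2 of Y}$, each contributing a factor $h$, so this term is $\le r^2 C_2 h^2$ per step. This yields $\bbE\fa{\abs{Y_{t_{i+1}}-Y^\pi_{t_{i+1}}}^2}\le\frac{\br{1+rh}^2}{1-h}\bbE\fa{\abs{Y_{t_i}-Y^\pi_{t_i}}^2}+\frac{g_i}{1-h}$ with $g_i := r^2 C_2 h^2 + \bbE\fa{\int_{t_i}^{t_{i+1}}\br{\tilde{Z}_t - \tilde{Z}^\pi_{t_i}}^2\de t}+\bbE\fa{\int_{t_i}^{t_{i+1}}\br{Z_t - Z^\pi_{t_i}}^2\de t}$.

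I would then invoke the discrete Gr\"onwall inequality \cite[Proposition 3.2]{emmrich1999discrete}. Since $Y_0-Y_0^\pi$ is deterministic and equals $Y_0\br{\theta;\zeta}-Y_0^\pi\br{\zeta}$, iterating gives $\bbE\fa{\abs{Y_T - Y_T^\pi}^2}\le\br{\frac{\br{1+rh}^2}{1-h}}^{n}\br{\abs{Y_0\br{\theta;\zeta}-Y_0^\pi\br{\zeta}}^2+\sum_{j=0}^{n-1}g_j}$, and for $h$ small the growth factor is bounded uniformly by $\e^{4\br{r+1}T}$. It then remains to estimate $\sum_j g_j$: the $r^2C_2h^2$ pieces sum to $r^2C_2hT$, and for the $Z$-component I would split $Z_t - Z^\pi_{t_i}=\br{Z_t-\bar{Z}_{t_i}}+\br{\bar{Z}_{t_i}-Z^\pi_{t_i}}$, use that $\bar{Z}_{t_i}-Z^\pi_{t_i}$ is constant on $\fa{t_i,t_{i+1}}$, and recall $Z^\pi_{t_i}=\mu_i$, to obtain $\sum_i\bbE\fa{\int_{t_i}^{t_{i+1}}\br{Z_t-Z^\pi_{t_i}}^2\de t}\le 2\varepsilon^Z\br{h}+2h\sum_i\bbE\fa{\abs{\bar{Z}_{t_i}-\mu_i}^2}$, and symmetrically $\sum_i\bbE\fa{\int_{t_i}^{t_{i+1}}\br{\tilde{Z}_t-\tilde{Z}^\pi_{t_i}}^2\de t}\le 2\varepsilon^{\tilde{Z}}\br{h}+2h\sum_i\bbE\fa{\abs{\bar{\tilde{Z}}_{t_i}-\phi_i}^2}$. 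Substituting back, absorbing $r,T,L_0,C_1,C_2$ and the Gr\"onwall constant into a single $C$, and finally taking the infimum over $\theta\in\Theta$ and over $\mu_i,\phi_i\in\calN\calN$ on both sides gives the claimed bound.

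The step I expect to be the main obstacle is the forward Gr\"onwall argument itself. The identity $\eqref{eq: Y_Z}$ links index $i+1$ on the left to index $i$ on the right, which forces the ``wrong-direction'' Young inequality and thereby introduces both a $1/\br{1-h}$ amplification and, more dangerously, a $1/h$-weighted drift term $\frac{r^2}{h}\bbE\fa{\br{\int_{t_i}^{t_{i+1}}\br{Y_t - Y_{t_i}}\de t}^2}$. The recursion closes only because Cauchy--Schwarz supplies one power of $h$ and the $L^2$-regularity $\eqref{eq: L^2 of Y}$ another, so this term is $O\br{h^2}$ per step and $O\br{h}$ in total, and because $\br{\br{1+rh}^2/\br{1-h}}^{n}$ stays bounded as $h\to0$; checking these two facts carefully — and that the final constant is genuinely independent of $h$ — is the delicate part, the rest being bookkeeping inherited from the proof of \cref{thm: thm1}.
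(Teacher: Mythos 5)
Your proposal follows essentially the same route as the paper's proof: the same $2a^2+2b^2$ decomposition of the terminal loss, the same reverse Young inequality applied to the identity \eqref{eq: Y_Z} to obtain a forward recursion, the same Cauchy--Schwarz plus $L^2$-regularity bound giving the $r^2C_2h^2$ per-step drift term, the same discrete Gr\"onwall step with constant $\e^{4(r+1)T}$, and the same splitting of the $Z$-terms into $\varepsilon^Z(h)$, $\varepsilon^{\tilde Z}(h)$ and the network approximation errors before taking infima. The argument is correct and matches the paper in all essentials.
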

	\section{Numerical experiments}\label{sec: exp}
	In this section, we evaluate the performance of the proposed calibration scheme within the rBergomi framework. We begin by assessing performance using simulated data. Because synthetic data is noise-free and devoid of market imperfections, it allows for an assessment of calibration accuracy independent of the rBergomi model's inherent limitations. Subsequently, we conduct calibrations using historical market data, treating model parameters as either scalars or time-varying functions. As the original mSOE scheme is ill-suited for time-varying parameters, we provide an adapted version in Appendix \ref{append: mSOE scheme for time-dependent rBergomi} to generate the necessary samples of $\br{X_t, V_t, W_t, W_t^\perp}$.
	
	We first outline the general experimental settings. We employ TensorFlow \cite{abadi2016tensorflow} as the automatic differentiation framework, largely adhering to the implementation of the \texttt{Deep BSDE Solver} \cite{deepBSDE}. 
	All experiments are conducted on the following computing infrastructure:
	CPU: Dual Intel Xeon 6226R (16 Core); GPU: NVIDIA Tesla V100 32GB SXM2; OS: Rocky Linux 8 (x86-64). The neural network architecture is defined as follows: for $i = 0, 1, \cdots, n-1$, the sub-networks $\mu_i$ and $\phi_i$ consist of fully connected neural networks with two hidden layers, each containing 32 neurons, utilizing a leaky ReLU activation function with a negative slope of 0.3. Batch normalization is applied immediately following each linear transformation and prior to activation. All trainable weights in the dense layers of $\mu_i$ and $\phi_i$ are initialized using the Xavier uniform distribution, with biases initialized to zero. No pre-training is employed.

	\subsection{Numerical accuracy} \label{subsec: numerical accuracy}
	In this subsection, we examine the calibration performance of the deep BSDE scheme using synthetic data. We utilize the ground truth model parameters $\theta$ and the initial guesses listed in Table \ref{tab: model param}, with $r = 0.03$ and $x_0 = 0$, to compute $\bbP_{MKT}$. All model parameters are treated as constants. This initial guess is common to all numerical experiments in this subsection.
	
	\begin{table}[htp]
		\centering
		\begin{tabular}{c cccc}
			\toprule
			& $\xi_0(t)$ & $H$ & $\rho$ & $\eta$ \\
			\hline
			Ground truth & 0.09 & 0.07 & -0.9 & 1.9\\
			Initial guess & 0.15 & 0.12 & -0.7 & 1.5\\
			\bottomrule
		\end{tabular}
		\caption{The set of ground truth parameters used for $\bbP_{MKT}$ and the initial guess.}
		\label{tab: model param}
	\end{table}
	The market parameter $\zeta$ is set to be 
	\begin{align*}
		\br{K_1, K_2, \cdots, K_5} &= \br{0.90, 0.95, 1.0, 1.05, 1.10},  \\ 
		\br{T_1, T_2, \cdots, T_5} &= \br{0.2, 0.4, 0.6, 0.8, 1}, 
	\end{align*}
	with $L = 5, N = 5$. $\bbP_{MKT}$ is computed using the mSOE scheme with a time step size $1/1000$ and $2^{14}$ Monte Carlo simulations. For the deep BSDE scheme, $2^7$ trajectories are generated per iteration, with a learning rate of 0.01. 
	
	Figure \ref{fig: error evolution} illustrates the evolution of three error metrics over the training iterations: $F(\theta)$, the average relative error, and the maximum relative error. Using the model parameters outputted by the calibration algorithm at each iteration, we compute $F(\theta)$ via $\eqref{eq: objective 2}$, where $\mathbb{Y}_0(\theta)$ is obtained using the mSOE scheme with a $1/1000$ step size and $2^{14}$ Monte Carlo simulations. The relative error is defined as	
	\begin{align}
		\text{Relative error} := \abs{\frac{\bbY_0(\theta) - \bbP_{MKT}}{\bbP_{MKT}}}\in \bbR^{L \times N}, \label{eq: rele err}
	\end{align}
	where $\abs{\cdot}$ denotes the absolute value, and the division is implemented element-wise. The average relative error corresponds to the mean of the entries in this matrix, while the maximum relative error represents the largest element. 
	\begin{figure}[htp]
		\includegraphics[width=1\linewidth]{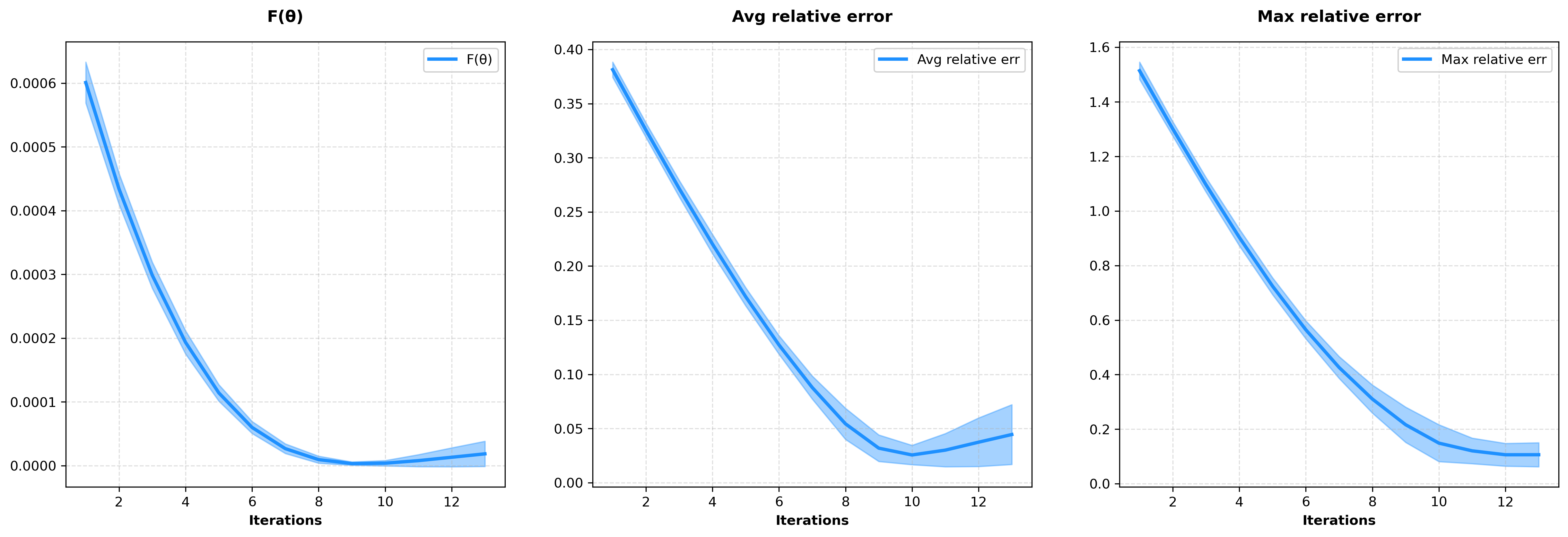}
		\centering
		\caption{Plot of error variation against the number of iterations using Algorithm \ref{alg: calibration_1}, with $h = 1/20$.}
		\label{fig: error evolution}
	\end{figure}
	In each panel, the solid blue curve represents the sample mean across ten independent runs using distinct random seeds, while the shaded region denotes one standard deviation around the mean. As shown in the left panel, $F(\theta)$ exhibits rapid decay within the first 10 iterations. The variance across random seeds is small at all iterations, which indicates highly stable optimization dynamics. This stability supports the implementation of an early stopping strategy once a desirable set of model parameters is identified. The middle panel shows that the average relative error decreases consistently from roughly 0.38 to 0.03 within the first 10 iterations, which demonstrates that the deep BSDE calibration method can accurately reconstruct the pricing surface with relatively few optimization steps. The right panel reveals a similar downward trend for the maximum relative error. However, the variability across seeds is larger than that of the average error. This suggests that the maximum relative error is driven by specific and challenging strike-maturity pairs that are sensitive to the optimization process. This observation is confirmed by the heatmaps in Figure \ref{fig: heatmap}.
	
	\begin{figure}[htp]
		\includegraphics[width=1.05\linewidth]{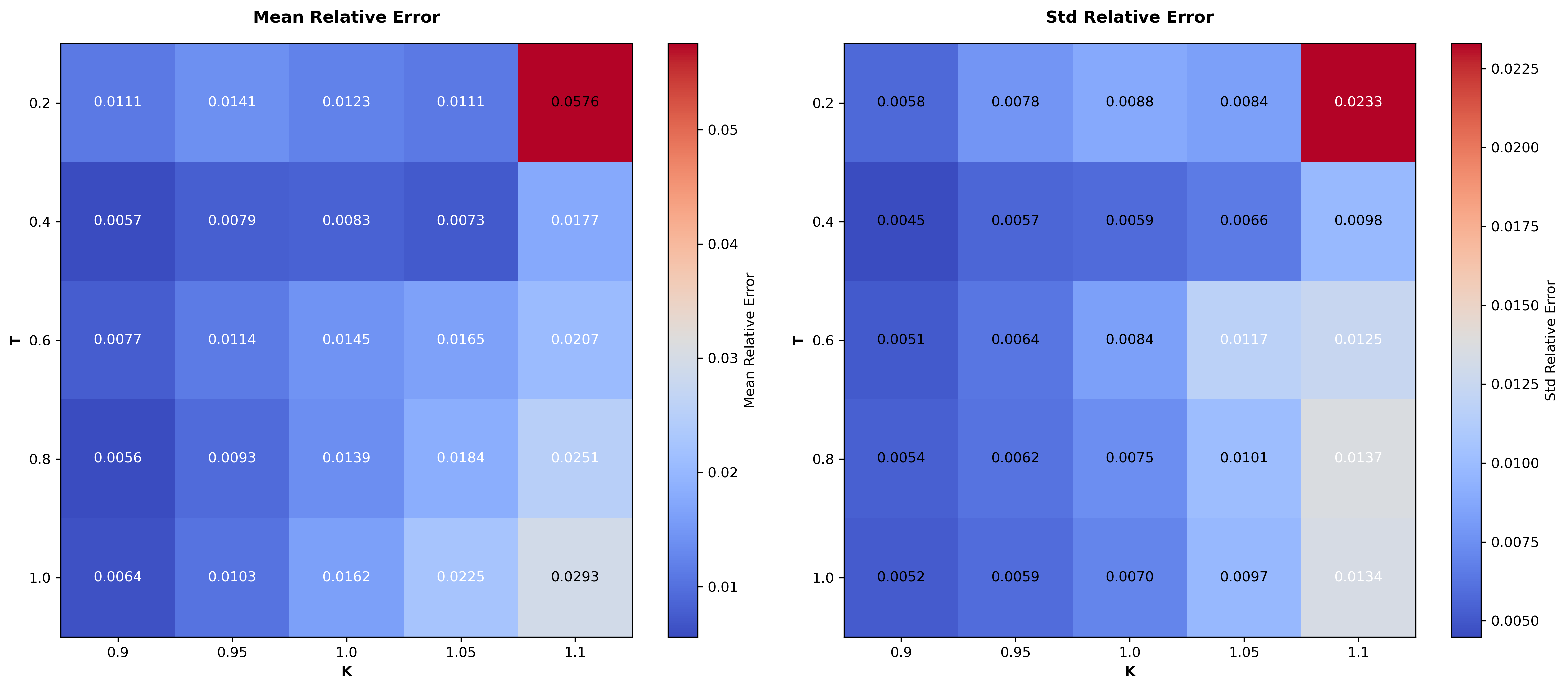}
		\centering
		\caption{The heatmaps of the maean relative pricing error and the standard deviation of relative pricing error, with $h = 1/20$.}
		\label{fig: heatmap}
	\end{figure}
	Figure \ref{fig: heatmap} displays the mean relative pricing error (left panel) and the standard deviation of the relative pricing error (right panel) across different strike levels $K$ and maturities $T$, based on 10 independent runs. Across most of the grid, the mean relative error remains in the range of $0.5\% - 2\%$, indicating that the parameters calibrated by the deep BSDE scheme achieve consistently accurate pricing over the strike-maturity grid. For a fixed maturity, the pricing error increases as $K$ moves from in-the-money to out-of-the-money. Elevated errors appear in the region of high strikes and short maturities, where the payoff corresponds to an extremely low probability mass. This results in noisy Monte Carlo-based gradients and increased variance across runs.
	
	Table \ref{tab: calibration_1} reports the impact of the time discretization size $h$ on the performance and stability of the deep BSDE calibration scheme. Each entry represents the mean of 10 independent training runs using different random seeds. The algorithm is terminated early when the target $F(\theta)$ does not decrease over a certain number of iterations, and we refer this number as "Patience". The table indicates that as the time step size $h$ decreases, the computational burden increases as evidenced by the second and third columns. Their product gives an estimate for the total training time. Let $\theta^\ast$ be the final output of the calibration algorithm. The mean squared error $F(\theta)$ decreases approximately proportionally with the refinement of $h$, which yields a more accurate approximation to the true solution. The average relative error also decreases steadily with a smaller $h$, implying the improvement for the global accuracy. The reduction in the maximum relative error is not monotonic. A finer time grid may degrade the worst-cast accuracy due to the optimization difficulty.
	\begin{table}[htp]
		\centering	
		\begin{adjustbox}{max width=\textwidth}
			\begin{tabular}{ccc |cc|cc|cc|c}
				\toprule			
				\makecell{$h$} & \makecell{Training \\ (s/Iter)}& \makecell{$\#$ \\ Iters} & $F(\theta^\ast)$ & $\text{SD}$& \makecell{Avg Rel. \\error} & $\text{SD}$ &\makecell{Max Rel. \\error} & $\text{SD}$ & $\calL(\theta^\ast;\nu^\ast)$ \\ \hline \\
				$1/10$ & 12.8426 &10.5 & 1.2375e-5& 2.4428e-6& 0.0370& 0.0049& 0.0722& 0.0130 & 0.0072\\
				$1/20$  &16.7639& 12.8& 8.0759e-6& 2.5289e-6& 0.0294& 0.0050& 0.0607& 0.0123& 0.0060\\
				$1/40$  &19.5850& 23.8& 4.3511e-6& 1.4783e-6& 0.0223& 0.0024& 0.0870& 0.0285& 0.0018\\				
				\bottomrule
			\end{tabular}
		\end{adjustbox}
		\caption{Calibration error variation of Algorithm \ref{alg: calibration_1} with step size $h$, with Patience = 1.}  
		\label{tab: calibration_1}	
	\end{table}

	\subsection{Calibration with historical data} \label{subsec: calibration by hist}	
	We calibrate the model to S\&P 500 index (ticker: SPX) European call option data sourced from OptionMetrics \url{www.optionmetrics.com} via Wharton Research Data Services (WRDS). We designate the underlying price $S_0$ as the official closing price of the S\&P 500 index and proxy the spot variance $V_0$ using the square of the at-the-money implied volatility with the shortest maturity. These values are listed in Table \ref{tab: price and variance}. The option prices used are the average of the maximum bid and minimum ask prices. Calibration is based on the selected strikes and maturities listed in Table \ref{tab: train and test set}. Test Set 1 refers to strike-wise out-of-sample testing, aiming to assess the calibrated model's pricing accuracy on unseen strikes. Test Set 2 refers to temporal out-of-sample testing, using parameters calibrated on February 28th to predict option prices on March 1st. 
	
	\begin{table}[htp]
		\centering	
		\begin{adjustbox}{max width=\textwidth}
			\begin{tabular}{c|c|c}
				\toprule
				& \text{Feb 28th, 2023} & \text{Mar 1st, 2023} \\
				\hline
				$S_0$ & 3970.15& 3951.39\\
				\hline
				$V_0$ & 0.0378& 0.0354\\		
				\bottomrule
			\end{tabular}
		\end{adjustbox}
		\caption{Underlying price $S_0$ and spot variance $V_0$ used for calibration.} 
		\label{tab: price and variance}
	\end{table}
	\begin{table}[htp]
		\centering    
		\begin{adjustbox}{max width=\textwidth}
			\begin{tabular}{c|c|c|c}
				\toprule
				& \text{Date} & $K$ & $T$ \\
				\hline
				\text{Train set} & \text{Feb 28th, 2023}& $[3750, 3800, 3850, 3900, 3950, 4000, 4050, 4100, 4150]$ & \multirow{3}{*}{$[0.2, 0.4, 0.6, 0.8, 1.0]$}\\
				\cline{1-3}
				\text{Test set 1}  & \text{Feb 28th, 2023} & $[3775, 3875, 3975, 4075, 4175]$ & \\
				\cline{1-3}
				\text{Test set 2} &\text{Mar 1st, 2023} & $[3750, 3850, 3950, 4050, 4150]$ & \\
				\bottomrule
			\end{tabular}
		\end{adjustbox}
		\caption{Strikes and maturities of train set and test set.} 
		\label{tab: train and test set}
	\end{table}
	
	Since the raw option data are not provided in a grid format corresponding to the maturities listed in Table \ref{tab: train and test set}, we select $\bbP_{MKT}$ such that the corresponding maturities $(\tilde{T}_1, \cdots, \tilde{T}_5)$ are close to $\br{T_1, \cdots, T_5}$ element-wise. To obtain $\bbY_0^\pi \in \bbR^{L \times N}$ aligned with the maturity grid $\br{T_1, \cdots, T_5}$, we apply cubic spline interpolation and constant extrapolation to $\bbP_{MKT}$ for each $K_{\ell}$ with the \texttt{tf-quant-finance} library \cite{tfquantfinance}. To compute $F(\theta)$, we apply interpolation to $\bbY_0(\theta)$ to ensure it aligns with the maturities associated with $\bbP_{MKT}$ for consistency. 
	
	Subsequently, we allow the interest rate $r$ to be time-varying and treat it as a function of time, $r = r(s)$. We define the corresponding log-price as 
	$ X_s := -\int_0^s r(u)\de u + \ln S_s$,
	which also follows $\eqref{eq: log price}$. The European option price with payoff function $h(\cdot)$ is given by
	\begin{align*}
		u(x) := \bbE\fa{\e^{-\int_0^T r(s)\de s}h\br{\exp\br{X_T + \int_0^Tr(s)\de s}}},
	\end{align*}
	and the associated BSDE is 
	\begin{align*}
		-\de Y_s &= - r(s)Y_s\de s - \tilde{Z}_s\de W_s - Z_s \de W_s^\perp,\\
		Y_T &= h\br{\exp\br{X_T + \int_0^T r(s) \de s}}.
	\end{align*}
	As the dataset contains current interest rate information at various maturities, we apply cubic spline interpolation to obtain $r(t)$ for any $t \in [0, T]$. The simulation scheme for the forward SDE remains unaffected by the time-varying interest rate. For the backward SDE, equation \eqref{eq: forward Euler BSDE} is modified to 
	\begin{align*}
		\bbY_{i+1}^\pi(\theta; \nu) = (1 + r(t_i)h)\bbY_i^{\pi}(\theta;\nu) + \tilde{\bbZ}^{\pi}_i(\theta; \nu)\Delta W_{t_i} + \bbZ^\pi_i(\theta;\nu)\Delta W_{t_i}^\perp.
	\end{align*}	
	It is well known that calibration is not necessarily a convex optimization problem, and the target $F(\theta)$ often exhibits multiple local minima. Consequently, a suitable initial guess for the model parameters is crucial for calibration performance. Motivated by the concept of using neural networks for warm-starting \cite{zhou2023neural}, one approach involves training the following inverse mapping using synthetic data: 
	\begin{align*}
		\Psi^{-1}: \mathbb{R}^{L\times N} \to \Theta,\quad\tilde{\calP}(\theta) \mapsto \theta,
	\end{align*}	
	and regarding $\Psi^{-1}(\bbP_{MKT})$ as the initial guess. In our implementation, we utilize the output of Algorithm \ref{alg: calibration_1} trained on a small subset of market data to obtain a rapid initialization. To perform calibration for subsequent dates, the calibrated results from the preceding day serve as a reasonable initial guess by assuming market conditions do not change drastically over short periods. More elaborate methods for determining initial guesses are left for future research. All numerical experiments in this section share the same initial guess of model parameters and parametric space $\Theta$ listed in Table \ref{tab: initial guess}.
	\begin{table}[htp]
		\centering
		\begin{tabular}{c cccc}
			\toprule
			& $\xi_0(t)$ & $H$ & $\rho$ & $\eta$ \\
			\hline			
			Initial guess & 0.05 & 0.04 & -0.6 & 2.5\\
			Constraint & $[0.01, 0.2]$ & $[0.01, 0.499]$ & $[-0.999, -0.1]$ & $[1.0, 4.0]$ \\
			\bottomrule
		\end{tabular}
		\caption{The initial guess of model parameters and the parametric space $\Theta$.}
		\label{tab: initial guess}
	\end{table}
	We employ the following general experimental setting: $\mathbb{Y}_0(\theta)$ is obtained by the mSOE scheme with step size $1/1000$ and $2^{17}$ Monte Carlo repetitions. The BSDE was solved with $h = 1/20$ and the number of samples generated per iteration is $2^{13}$.

	\subsubsection{Model parameters as scalars}
	First, we consider the case that all model parameters are scalars. We present in-sample and out-of-sample errors in Table \ref{tab: calibration hist 1}, where the result presented is the mean of 10 independent runs with different random seeds. In a comparative analysis against the standard Monte Carlo benchmark, the deep BSDE scheme highlights a significant advantage in computational efficiency.  While the fully converged Monte Carlo method achieves the lowest in-sample average relative error $0.91\%$, it requires significantly more computation time. In contrast, the deep BSDE scheme achieves a highly competitive error rate $1.36\%$ within $240$s. When the Monte Carlo method is restricted to a similar time budget, its accuracy degrades substantially to $2.97\%$. Moreover, the deep BSDE scheme maintains low error rates for both the strike-wise and temporal-wise out-of-sample datasets, which demonstrates its robust generalization capability comparable to the converged benchmark. The calibrated model parameters of the best run out of 10 runs are presented in Table \ref{tab: calibrated model param 1}.
	
	\begin{table}[htp]
		\centering
		\begin{adjustbox}{max width=\textwidth}
			\begin{tabular}{c|cc|ccc|ccc|ccc}
				\toprule			
				&\multicolumn{2}{c}{Training} & \multicolumn{3}{c}{In-sample}&\multicolumn{3}{c}{Out-of-sample 1} &\multicolumn{3}{c}{Out-of-sample 2} \\
				\hline
				&Time (s/Iter) & $\#$ Iters & $F(\theta^\ast)$& \makecell{Avg rel. \\error} & \makecell{Max rel. \\error} & $F(\theta^\ast)$& \makecell{Avg rel. \\error} & \makecell{Max rel. \\error} & $F(\theta^\ast)$& \makecell{Avg rel. \\error} & \makecell{Max rel. \\error} \\
				\hline 
				\makecell{Deep BSDE \\ (Converged)} &  18.0359& 13 & 12.3758& 0.0136& 0.0866& 20.0577& 0.0202& 0.1227& 12.1314& 0.0157& 0.1072\\				
				\hline
				\makecell{Monte Carlo \\ (Time constrained)}  & 40.4468& 6 & 59.7336& 0.0297& 0.1483& 63.3884 & 0.0357& 0.1917& 71.3379& 0.0390 & 0.1768\\				
				\hline
				\makecell{Monte Carlo \\ (Converged)} &  40.4468 &  200& 5.6158& 0.0091& 0.0650& 9.7465& 0.0141& 0.0950& 6.6558 & 0.0118& 0.0798\\				
				\bottomrule
			\end{tabular}
		\end{adjustbox}
		\caption{Calibration error for the deep BSDE scheme and the Monte Carlo method. The number of samples generated per iteration for Monte Carlo is $2^14$. The learning rate for both schemes is set to be $0.0004$, Patience = 1.}
		\label{tab: calibration hist 1}	
	\end{table}
	
	\begin{table}[htp]
		\centering
		\begin{tabular}{c cccc}
			\toprule
			& $\xi_0^\ast$ & $H^\ast$ & $\rho^\ast$ & $\eta^\ast$ \\
			\hline
			Deep BSDE & 0.0448& 0.0452& -0.6052 & 2.4948\\
			Monte Carlo & 0.0464 & 0.0535 & -0.6470&  2.5066\\ 
			\bottomrule
		\end{tabular}
		\caption{The set of best calibrated model parameters out of 10 runs of deep BSDE scheme and pure Monte Carlo.}	
		\label{tab: calibrated model param 1}	
	\end{table}
	
	\subsubsection{Model parameters as time-varying functions} 
	To enhance the model's learning capability, we replace some model parameters with 5-segment piecewise constant functions. Table \ref{tab: calibration hist 2} compares the performance of four distinct parameterization strategies and  the reported results represent the mean of 10 independent runs. Case 1, which parameterizes only the initial forward variance curve $\xi_0(t)$, fails to yield superior in-sample or out-of-sample performance compared to the baseline model (where all parameters are scalars), as shown in Row 1 of Table \ref{tab: calibration hist 1}. In contrast, parameterizing $H(t)$ incurs higher computational costs but outperforms all other models on both in-sample and out-of-sample datasets. This suggests that $H$ plays a more critical role than other parameters in identifying option prices. The calibrated model parameters are plotted in Figure 3, where "Case 0" denotes the scenario in which all parameters are treated as constants.
	\begin{table}[htp]
		\centering
		\begin{adjustbox}{max width=\textwidth}
			\begin{tabular}{c|c|ccc|ccc|ccc}
				\toprule			
				&&\multicolumn{3}{c}{In-sample}&\multicolumn{3}{c}{Out-of-sample 1} &\multicolumn{3}{c}{Out-of-sample 2} \\
				\hline
				&Time (s/Iter) & $F(\theta^\ast)$& \makecell{Avg rel. \\error} & \makecell{Max rel. \\error} & $F(\theta^\ast)$& \makecell{Avg rel. \\error} & \makecell{Max rel. \\error} & $F(\theta^\ast)$& \makecell{Avg rel. \\error} & \makecell{Max rel. \\error} \\
				\hline 
				Case 1: $\xi_0(t)$& 18.9727& 12.3960& 0.0136& 0.0869& 20.3964&0.0208&0.1320& 14.8027& 0.0174&0.1165 \\				
				\hline
				Case 2: $H(t)$ & 26.4407 & 12.2178 & 0.0134 & 0.0807 & 19.8499 & 0.0199 & 0.1176 & 12.0461 & 0.0154 & 0.1025 \\
				\hline				
				Case 3: $\xi_0(t), H(t)$ & 29.1549& 12.2122& 0.0134& 0.0808& 20.2548& 0.0205& 0.1267& 14.7771& 0.0171&  0.1116\\
				\hline
				Case 4: $\xi_0(t), \eta(t)$ & 22.1040& 12.3540& 0.0136& 0.0878&20.6784& 0.0209& 0.1329& 15.1855& 0.0176& 0.1175\\				
				\bottomrule
			\end{tabular}
		\end{adjustbox}
		\caption{Calibration errors of time-varying functions. The learning rate for is set to be $0.0004$, Patience = 1.}
		\label{tab: calibration hist 2}	
	\end{table}
	
	\begin{figure}[htp]
		\includegraphics[width=0.95\linewidth]{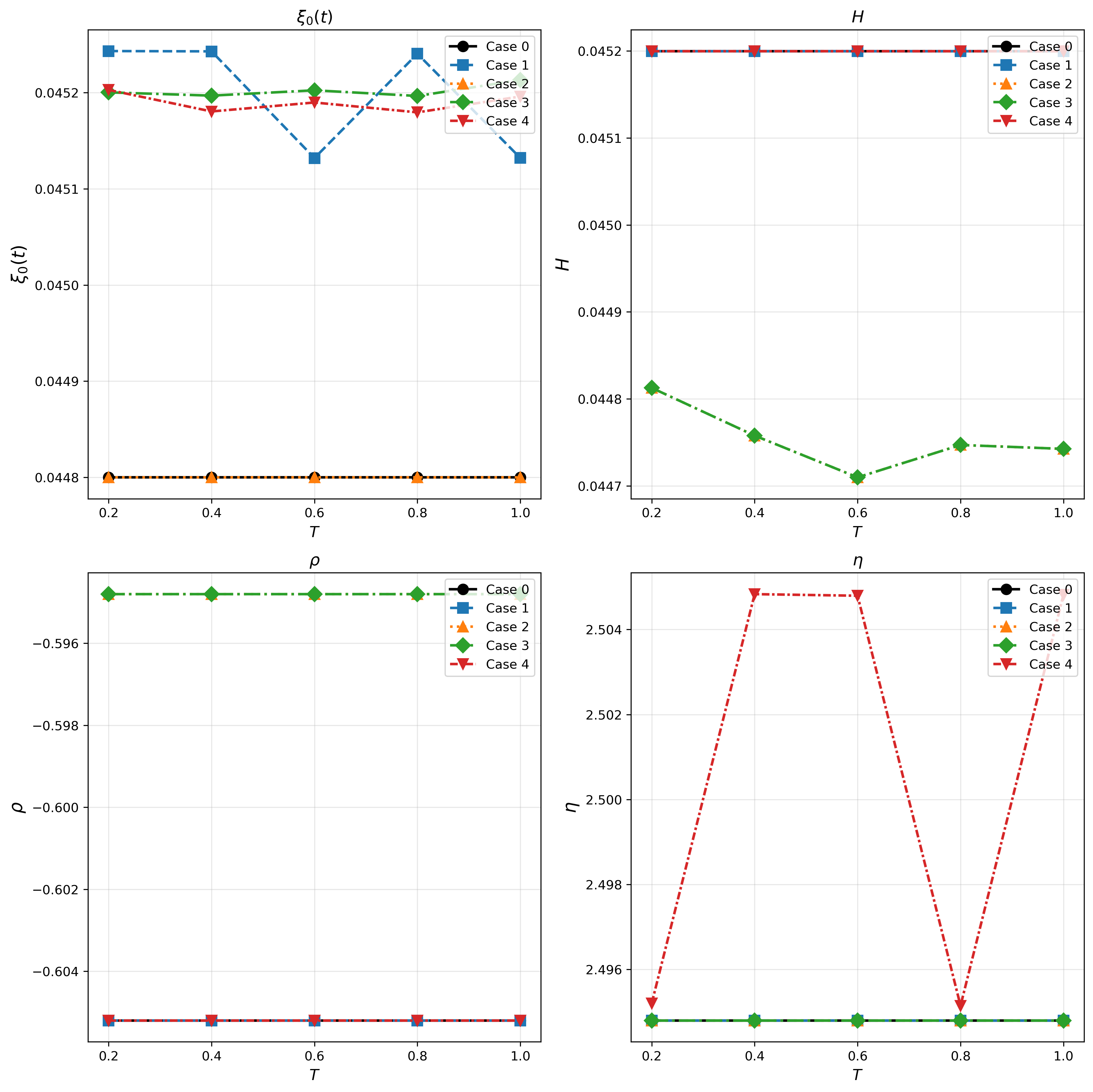}
		\centering
		\caption{Calibrated model parameters under different parameterizations.}
		\label{fig: calibrated parameters}
	\end{figure}
	
	\section{Conclusion}\label{sec:conclusion}
	In this study, we proposed a novel unsupervised learning framework for calibrating rough volatility models that eliminates the need for labeled training data. The methodology utilizes historical market data as the initial condition to simulate the corresponding Backward Stochastic Differential Equation (BSDE) forward in time, employing the terminal discrepancy as the loss function. We utilize neural networks to approximate the unknown solutions of the BSDE with increasing input dimensions and treat model parameters as trainable variables to be optimized concurrently. We established rigorous upper bounds on the discrepancy between historical data and the prices implied by the learned model parameters, expressed as a function of the loss. Furthermore, we demonstrated that this loss can be minimized arbitrarily, contingent upon the market fitting capacity of rough volatility models and the universal approximation properties of neural networks.
	
	Future research will focus on developing a dimension-stationary numerical scheme for the Backward Stochastic Partial Differential Equation (BSPDE). The current approach links the BSPDE solution to a BSDE and employs neural networks with increasing input dimensions to approximate the BSDE solutions, following the implementation described in \cite{bayer2022pricing}. Consequently, the maximal dimension scales linearly with the reciprocal of the time step size. This scaling poses a challenge for traditional numerical methods and substantially increases the computational complexity of the neural network architecture. Additionally, we intend to generalize the rough volatility model to incorporate time-dependent parameters. Numerical experiments provided evidence that a time-varying Hurst index enhances the model's fitting capability. Further research is required to extend the results presented in \cite{bayer2022pricing} and the convergence analysis in Section \ref{sec: convergence} to this time-dependent parameter setting.
	\paragraph{Acknowledgment.} The authors are very grateful to the anonymous referees for their line-by-line and valuable comments which have greatly helped the authors to improve the work.  
	
	\appendix
	\section{mSOE scheme for time-dependent rBergomi model} \label{append: mSOE scheme for time-dependent rBergomi}
	We extend the modified sum-of-exponentials (mSOE) scheme, originally proposed in \cite{teng2025efficient}, to the rBergomi model with time-dependent parameters, where the model parameters are functions of time $t$. The dynamics is given by
	\begin{align}
		\de S_t &= r(t)S_t\de t + S_t\sqrt{V_t}\br{\rho(t)\de W_t + \sqrt{1 -\rho(t)^2}\de W_t^\perp}, &&S_0 = s_0, \nonumber \\ 
		V_t &= \xi_0(t)\exp\br{\eta(t)\sqrt{2H(t)}\int_0^t\br{t - s}^{H(t) - \frac{1}{2}}\de W_s - \frac{\eta(t)^2}{2}t^{2H(t)}}, &&V_0 = v_0, \label{eq: variance of time-dependent rBergomi}
	\end{align}
	defined on a risk-neutral probability space $(\Omega, \calF, \br{\calF}_{t \in [0, T]}, \bbQ)$. Let the log price process be defined as $X_t = -\int_0^t r(s)\de s + \ln S_t$. To generate the necessary samples of $(X_t, V_t, W_t, W_t^\perp)$, we consider an equidistant temporal grid $0 = t_0 < t_1 < \cdots < t_n = T$ with time step size $h := T/n$ and $t_i = ih$. A primary challenge in sampling $(V_{t_i})_{i=1}^n$ arises from the stochastic integral within the exponential function, denoted as
	\begin{align*}
		I(t_i) := \sqrt{2H(t_i)}\int_0^{t_i}\br{t_i - s}^{H(t_i) - \frac{1}{2}}\de W_s.
	\end{align*}
	The fundamental principle of the mSOE scheme is to treat the kernel $K_i(x):= x^{H(t_i) - \frac{1}{2}}$ analytically near its singularity at $x = 0$ and approximate it using a sum of exponentials elsewhere. Specifically, let $\hat{K}_i$ be the kernel approximation defined as	 
	\begin{align*}
		\hat{K}_i(x) :=\left\{
		\begin{array}{ll}
			x^{H(t_i) - \frac{1}{2}}    & x \in [t_0, t_1), \\	
			\sum_{j = 1}^{N_{\exp}}\omega_j(t_i)\e^{-\lambda_j(t_i)x}  & x \in [t_1, t_n],   
		\end{array} \right.	
	\end{align*}
	where $\br{\lambda_j(x_i)}_{j = 1}^{\Nexp}$ is a set of nodes and $\br{\omega_j(x_i)}_{j = 1}^{\Nexp}$ are the corresponding weights, both of which implicitly depend on $t_i$. For simplicity, we assume the number of summation terms $\Nexp$ is identical for each kernel $K_i$. Based on this approximation, $I(t_i)$ is estimated as the sum of a local part and a historical part:
	\begin{align*}
		\bar{I}(t_i):&= \underbrace{\sqrt{2H(t_i)}\int_{t_{i-1}}^{t_i} (t_i - s)^{H(t_i) - \frac{1}{2}}\de W_s}_{I_{\calN}(t_i)} + \underbrace{\sqrt{2H(t_i)}\sum_{j = 1}^{\Nexp}\omega_j(t_i)\int_{0}^{t_{i-1}}\e^{-\lambda_j(t_i)(t_i - s)}\de W_s}_{\bar{I}_\calF(t_i)}. 
	\end{align*}
	We define $\bar{I}_\calF^j$ to be the $j$th historical factor for $j = 1, \cdots, \Nexp$. The weighted sum of these factors constitutes the historical component, such that
	\begin{align*}
		\bar{I}_\calF^j(t_i) := \int_0^{t_{i-1}} \e^{-\lambda_j(t_i)(t_i - s)}\de W_s,\quad \bar{I}_\calF(t_i) = \sqrt{2H(t_i)}\sum_{j=1}^{N_{\text{exp}}}\omega_j(t_i)\bar{I}_\calF^j(t_i).
	\end{align*} 
	Through direct calculation, we obtain
	\begin{align*}
		\bar{I}^j_\calF(t_i) = \e^{-\lambda_j(t_i)h}\br{\int_0^{t_{i-2}}\e^{-\lambda_j(t_i)\br{t_{i-1} - s}}\de W_s + \int_{t_{i-2}}^{t_{i-1}} \e^{-\lambda_j(t_i)\br{t_{i-1} - s}}\de W_s}.
	\end{align*}
	The first stochastic integral within the brackets is Gaussian and perfectly correlated with $\bar{I}_\calF^j(t_{i-1}) = \int_0^{t_{i-2}}\e^{-\lambda_j(t_{i-1})\br{t_{i-1} - s}}\de W_s$ so it is a scalar multiple of $\bar{I}_\calF^j(t_{i-1})$. Thus, we can write
	\begin{align*}
		\bar{I}^j_\calF(t_i) = \e^{-\lambda_j(t_i)\tau}\br{\calV^j(t_i)\bar{I}_\calF^j(t_{i-1}) + \int_{t_{i-2}}^{t_{i-1}} \e^{-\lambda_j(t_i)\br{t_{i-1} - s}}\de W_s}, 
	\end{align*}
	where the scaling factor is given by	
	\begin{align*}
		\calV^j\br{t_i} :&= \frac{\bbE\fa{\br{\int_0^{t_{i-2}}\e^{-\lambda_j(t_i)\br{t_{i-1} - s}}\de W_s}^2}^{1/2}}{\bbE\fa{\br{\bar{I}_\calF^j(t_{i-1})}^2}^{1/2}} \\
		&= \br{\frac{\lambda_j(t_{i-1})}{\lambda_j(t_i)}\frac{\e^{-2\lambda_j(t_i)t_{i-1}} - \e^{-2\lambda_j(t_i)\tau}}{\e^{-2\lambda_j(t_{i-1})t_{i-1}} - \e^{-2\lambda_j(t_{i-1})\tau}}}^{1/2}, 
	\end{align*}
	for $i = 3, \cdots, n, j = 1, \cdots, \Nexp$. We set $\calV^j\br{t_2} = 1$ since $\bar{I}_\calF^j(t_1) = 0$ by definition. This yields the following recursive formula for each historical factor
	\begin{equation*}
		\bar{I}_\calF^j(t_i) =\left\{
		\begin{aligned}
			&0 && i = 1, \\
			&\e^{-\lambda_j(t_i)\tau}\br{\calV^j(t_i)\bar{I}_\calF^j(t_{i-1}) + \int_{t_{i-2}}^{t_{i-1}} \e^{-\lambda_j(t_i)\br{t_{i-1} - s}}\de W_s} &&  i \geq 2.
		\end{aligned} 
		\right.
	\end{equation*}
	Given this recurrence, we must simulate a centered $\br{\Nexp + 2}$-dimensional Gaussian random vector at time $t_i$ for $i = 1, \cdots, n-1$	
	\begin{align*}
		\Xi_i := \br{\Delta W_{t_i}, \int_{t_{i-1}}^{t_i}\e^{-\lambda_1(t_{i+1})\br{t_i - s}}\de W_s, \cdots, \int_{t_{i-1}}^{t_i}\e^{-\lambda_{\Nexp}(t_{i+1})\br{t_i - s}}\de W_s, I_\calN(t_i)}, 
	\end{align*}
	where $\Delta W_{t_i} = W_{t_i} - W_{t_{i-1}}$ is included  allow for the joint simulation of $X_t$ and $V_t$. At the terminal time $t_n = T$, simulation is required only for the 2-dimensional Gaussian vector $\Xi_n := \br{\Delta W_{t_n}, I_\calN(t_n)}$. The entries of the covariance matrix $\Sigma^i$ of the Gaussian vector $\Xi_i$ are defined as follows:
	\begin{align*}
		& \Sigma^i_{1, 1} = h, \quad \Sigma^i_{1, \ell} = \Sigma^i_{\ell, 1} = \frac{1 - \e^{-\lambda_{\ell-1}(t_{i+1})h }}{\lambda_{\ell-1}(t_{i+1})}, \\
		& \Sigma^i_{1, \Nexp+2} = \Sigma^i_{\Nexp+2, 1} = \frac{\sqrt{2H(t_i)}h^{H(t_i) + 1/2}}{H(t_i) + 1/2}, \\	
		& \Sigma^i_{k, \ell} = \Sigma^{i}_{\ell, k} = \frac{1 - \e^{-\br{\lambda_{\ell-1}(t_{i+1}) + \lambda_{k-1}(t_{i+1})}h}}{\lambda_{\ell-1}(t_{i+1}) + \lambda_{k-1}(t_{i+1})}, \\	
		& \Sigma^i_{\Nexp+2, \ell} = \Sigma^i_{\ell, \Nexp+2} = \frac{\sqrt{2H(t_i)}}{\lambda_{\ell-1}^{H + 1/2}(t_{i+1})}\gamma(H(t_i) + 1/2, \lambda_{\ell-1}(t_{i+1})h),\\
		& \Sigma^i_{\Nexp + 2, \Nexp+2} = h^{2H(t_i)},
	\end{align*}
	for $k, \ell = 2, \cdots, \Nexp+1$, where $\gamma(\cdot, \cdot)$ is the lower incomplete gamma function. Since $\Sigma^i$ varies w.r.t. $i$, we must implement Cholesky decomposition at each time step. Furthermore, as the term $t^{2H(t)}$ in \ref{eq: variance of time-dependent rBergomi} corresponds to the second moment of $I(t_i)$, we replace it by $\bbE[\bar{I}(t_i)^2]$:
	\begin{align*}
		\bbE[\bar{I}^2(t_i)] = h^{2H(t_i)} + 2H(t_i)\sum_{k,\ell=1}^{\Nexp}\frac{\omega_k(t_i)\omega_\ell(t_i)}{\lambda_k(t_i) + \lambda_\ell(t_i)}(\e^{-(\lambda_k(t_i) + \lambda_\ell(t_i))h} - \e^{-(\lambda_k(t_i) + \lambda_\ell(t_i))t_i}).
	\end{align*}
	Finally, the mSOE scheme is summarized as follows for $i = 1, \cdots, n$	\begin{enumerate}
		\item Compute nodes $(\lambda_j(t_i))_{j=1}^{\Nexp}$ and weights $(\omega_j(t_i))_{j=1}^{\Nexp}$ for each kernel $K_i$. 
		\item Implement Cholesky decomposition for $\Sigma^{i}$ and sample the Gaussian vector $\Xi_{i}$ together with $W_{t_i}^\perp$.
		\item Update historical components:
		\begin{align*}
			\bar{I}_{\calF}^j(t_{i+1}) = \e^{-\lambda_j(t_{i+1})h}\br{\calV^j(t_{i+1})\bar{I}^j_\calF(t_i) + \Xi_i^{(j+1)}},\quad j = 1, \cdots, \Nexp,
		\end{align*}
		where $\Xi_i^{(j+1)}$ denotes the $(j+1)$th entry of $\Xi_i$.
		\item Compute the variance process:
		\begin{align*}
			\bar{V}_{t_i} = \xi_0(t_i)\exp(\eta(t_i)\br{\Xi_i^{(\Nexp + 2)} + \sqrt{2H(t_i)}\sum_{j=1}^{\Nexp}\bar{I}_\calF^j(t_i)} - \frac{\eta(t_i)^2}{2}\bbE\fa{\bar{I}^2(t_i)}),\quad \bar{V_0} = v_0.		
		\end{align*}
		\item Update the log asset price process by Euler-Maruyama:
		\begin{align*}
			\bar{X}_{t_i} = \bar{X}_{t_{i-1}} - \frac{1}{2}\bar{V}_{t_{i-1}}h + \sqrt{\bar{V}_{t_{i-1}}}\br{\rho(t_{i-1})\Delta W_{t_i} + \sqrt{1 - \rho(t_{i-1})^2}\Delta W_{t_i}^\perp},\quad \bar{X}_0 = \log s_0.
		\end{align*}		
	\end{enumerate}
	\paragraph{Computational Complexity:} The mSOE scheme entails an offline cost of $\calO\br{n\Nexp^3}$ for the Cholesky decompositions and an online computational cost of $\calO(nN_{\text{exp}})$, with a storage requirement of $\calO(N_{\text{exp}})$ per path.
	
	\bibliographystyle{abbrv}
	\bibliography{myref}
	
\end{document}